\newtheorem{thm}{Theorem}[section]
\newtheorem{prop}[thm]{Proposition}
\newtheorem{cor}[thm]{Corollary}
\newtheorem{definition}[thm]{Definition}
\newtheorem{remark}[thm]{Remark}
\theoremstyle{definition}
\newtheorem{example}[thm]{Example}
\definecolor{col1}{RGB}{100,143,255}
\definecolor{col2}{RGB}{120, 94, 240}
\definecolor{col3}{RGB}{254,97,0}
\definecolor{col4}{RGB}{220, 38, 127}
\definecolor{col5}{RGB}{255, 176, 0}
\newcommand{\bigO}{\mathcal{O}}
\newcommand{\Z}{{\mathbb{Z}}}
\newcommand{\Q}{{\mathbb{Q}}}
\newcommand{\R}{{\mathbb{R}}}
\newcommand{\A}{{\mathcal{A}}}
\newcommand{\dd}{{\mathrm{d}}}
\title[Tree-tubings and the combinatorics of resurgent Dyson--Schwinger equations]{Tree-tubings and the combinatorics of\\ resurgent Dyson--Schwinger equations}
\author{Michael Borinsky \and Gerald V. Dunne \and Karen Yeats}
\address{
Michael Borinsky\\
Perimeter Institute for Theoretical Physics\\
31 Caroline St N\\
Waterloo\\
ON N2L 2Y5\\
Canada
}
\address{
Michael Borinsky\\
Institute for Theoretical Studies\\
ETH Z\"urich\\
Clausiusstrasse 47\\
8006 Zürich\\
Switzerland\\
}
\address{
Gerald V. Dunne\\
Department of Physics\\ 
University of Connecticut\\
   196 Auditorium Road\\
   Storrs CT 06269-3046\\
USA
}
\address{
Karen Yeats\\
Department of Combinatorics and Optimization\\
University of Waterloo\\
200 University Ave. W.
Waterloo ON, N2L 3G1
Canada
}
\begin{document}

\begin{abstract}
    We give a novel combinatorial interpretation to the perturbative series solutions for a class of Dyson-Schwinger equations.  We show how binary tubings of rooted trees with labels from an alphabet on the tubes, and where the labels satisfy certain compatibility constraints, can be used to give series solutions to Dyson-Schwinger equations with a single Mellin transform which is the reciprocal of a polynomial with rational roots, in a fully combinatorial way.  Further, the structure of these tubings leads directly to systems of differential equations for the anomalous dimension that are ideally suited for resurgent analysis.  We give a general result in the distinct root case, and investigate the effect of repeated roots, which drastically changes the asymptotics and the transseries structure.
\end{abstract}

\maketitle
\section{Introduction}

Dyson-Schwinger equations are tools to study non-perturbative aspects of quantum field theories. A certain type of Dyson-Schwinger equations has the basic structure,
\begin{equation}\label{DSE k=1 s=-2}
G(x,L) = 1 + x\, G(x, \partial_\rho)^{-1} (e^{L\rho}-1)F(\rho)|_{\rho=0}
\end{equation}
where $F(\rho)$ is a given formal Laurent series in $\rho$ with a simple pole and $\partial_\rho$ is the usual differential operator associated to $\rho$.
In the ring of formal power series \eqref{DSE k=1 s=-2} uniquely defines a bivariate formal power series $G(x,L) $ whose coefficients are polynomials in the coefficients of $F(\rho)$. In \cite{BCEFNOHY}, the third author, with other coauthors, proved that the power series solution to \eqref{DSE k=1 s=-2}, as well as to more general equations of a similar nature, can be viewed purely combinatorially and constructed explicitly in terms of tubings of rooted trees, the definition of which will be given below. Consequently, solutions of Dyson-Schwinger equations of this form can be constructed using relatively standard tools of combinatorial generating functions.

Here, we study the solutions of \eqref{DSE k=1 s=-2}, when $F(\rho)$ is the reciprocal of a polynomial with rational roots.
Our main results are that, in this case, the coefficients of power series solutions of \eqref{DSE k=1 s=-2} 
count certain labelled tubings of rooted trees and that these solutions can be
described systematically by univariate nonlinear differential equation systems (Theorem~\ref{thm comb de}).

From a physics perspective, the Dyson-Schwinger equations in \eqref{DSE k=1 s=-2} arise when approximating a correlation function's full perturbative expansion in a specific quantum field theory by neglecting all Feynman diagram contributions that lie outside a particular family of diagrams. In the present case, the contributing family of diagrams is constructed by inserting a single 1-loop propagator graph into itself, where the insertion is into only one internal propagator, but in that propagator, the recursive insertions can be nested and chained. The series $F(\rho)$ encodes the \emph{Mellin transform} of the integral for the 1-loop graph with the edge into which we are inserting regularized by $\rho$. For that reason, we refer to the Laurent series $F(\rho)$ as `the Mellin transform' through out the paper. In Section~\ref{sec:approxs}, we discuss a series of more invasive approximations that are often used in the physics literature and explain how our formulas directly degenerate to these further approximations.

The solutions to Dyson-Schwinger equations \cite{bk30,bkerfc} generate interesting resurgent transseries, as was shown by the first two authors with other coauthors~\cite{BD,BDM,BB}. %
Our specific set of solutions to \eqref{DSE k=1 s=-2} incorporates all previously studied 
examples as special cases. It thereby provides a combinatorial perspective previously lacking in all but the simplest of these examples \cite{BD}. %

Dyson-Schwinger equations are one of the few tools with which the divergence of the perturbative expansions in quantum field theories can be studied. 
As expected by physically motivated
arguments in quantum field theory \cite{Dyson:1952tj}, the coefficients of our solutions of \eqref{DSE k=1 s=-2} often grow factorially.
In Section~\ref{sec:asymptotics}, we study this asymptotic growth rate using the \emph{ring of factorially divergent power series}, developed by the first author \cite{Bgenfun}, a simplified version of \emph{alien calculus}, which, in its fully fledged form, is one of the pillars of the \emph{resurgence} framework \cite{ecalle,MR3526111}. Additionally, we illustrate the \emph{transseries} approach to study the asymptotic properties of the solutions to \eqref{DSE k=1 s=-2}. %

We will illustrate our results using the special, previously studied cases.
The two cases examined in~\cite{bkerfc} can both be written in the form of \eqref{DSE k=1 s=-2}, 
with $F(\rho) = 1/(\rho(2-\rho))$ in their Yukawa case and 
$F(\rho) = 1/(\rho(1-\rho)(2-\rho)(3-\rho))$ in their $\phi^3$ case
(see Remark~\ref{rem:rescaling} for differing prefactor and scaling conventions for $F(\rho)$ and for $G(x,L)$).  Resurgent analysis of these two cases was accomplished in \cite{BD} and \cite{BDM}, respectively, with further work on the $\phi^3$ case along with an intermediate model in \cite{BB}.  In the latter paper, the intermediate model was somewhat ad-hoc, but in our context, it also appears from the same Dyson-Schwinger equation with Mellin transform $F(\rho) = 1/(\rho(1-\rho)(2-\rho))$. In this way, the intermediate model appears more naturally in the present setup; it is simply a different integral for the 1-loop bubble.

\section*{Acknowledgements}

KY thanks Nick Olson-Harris, and GD thanks Ovidiu Costin, for useful discussions. 
We also thank Paul Balduf, Marc Bellon, David Broadhurst, and Dirk Kreimer for their helpful comments.

The authors are grateful to the Isaac Newton Institute for Mathematical Sciences in Cambridge, for support and hospitality during the programme ``Applicable resurgent asymptotics: towards a universal theory,'' where this project was begun, to Perimeter Institute for some of the middle of the project,
and to the Max Planck Institute for Mathematics in Bonn for its hospitality and financial support when completing this project.  Research at Perimeter Institute is supported in part by the Government of Canada through the
Department of Innovation, Science and Economic Development Canada and by the province of
Ontario through the Ministry of Economic Development, Job Creation and Trade. 

MB is supported by Dr.\ Max Rössler, the Walter Haefner Foundation, and the ETH Zürich Foundation.
GD is supported by the U.S.\ Department of Energy, Office of High Energy Physics, Award DE-SC0010339.
KY is supported by an NSERC Discovery grant and the Canada Research Chairs program, and through part of this work by the Simons Foundation via the Emmy Noether Fellows Program at Perimeter Institute. 

\section{Solving Dyson-Schwinger equations by tubings}

To explain the solution of \eqref{DSE k=1 s=-2} in terms of tubings of plane rooted trees, we need a few definitions.  A \emph{plane rooted tree}, defined recursively, is a vertex $r$, called the root, and a possibly empty ordered list of plane rooted trees whose roots are the children of $r$.  A plane rooted tree can also be seen as a tree in the graph theory sense with a special vertex $r$, called the root, and at each vertex $v$ a linear order on the edges incident to $v$ which are not on the path from $v$ to $r$.

A \emph{tube} of a plane rooted tree $t$ is a set of vertices of $t$ that induces a connected subgraph of $t$.  Note that each tube inherits a plane rooted tree structure from $t$.  A \emph{tubing} $\tau$ of $t$ is a set of tubes of $t$ including the \emph{outer tube} consisting of all vertices of $t$ and for which for every tube $a$ containing at least two vertices in $\tau$ there exist two other tubes of $\tau$ which partition $a$.  These are the tubings studied in \cite{BCEFNOHY}, and they can also be called \emph{binary tubings} to distinguish them from other notions of tubing in the literature, see the appendix of \cite{BCEFNOHY}.

The number of tubes in a tubing $\tau$ of a tree $t$ with $n$ vertices is $2n-1$. This can be proved by a quick induction on the binary structure of the tubing.  Another useful thing to note is that for a fixed tubing $\tau$ every edge of the graph corresponds to a partitioning of some tube into two subtubes in $\tau$, and every such partitioning in $\tau$ corresponds to an edge of the tree.

The logarithmic divergence of the Mellin transformed Feynman diagram implies that $F(\rho)$ has a simple pole at the origin.  So, in accordance to the physical origins of \eqref{DSE k=1 s=-2}, we will assume that $F(\rho)$ can be written as the following Laurent series expansion
\[
F(\rho) = \sum_{i\geq 0} c_i\rho^{i-1}.
\]
\begin{remark}
Symmetries of specific quantum field theories can further constrain 
the structure of the Laurent series $F(\rho)$. 
For instance, the conformal symmetry of Yukawa theory implies that $F(\rho) = F(2-\rho)$.
Here, we only assume that $F(\rho)$ is of the general form above.
\end{remark}

Given a tubing $\tau$ of a plane rooted tree $t$, define the \emph{Mellin monomial} of $\tau$ to be
\[
c(\tau) = \prod_{\substack{v\in V(t) \\  v \text{ not the root of $t$}}} c_{b(v,\tau)-1}.
\]
where $b(v, \tau)$ is the number of tubes of $\tau$ for which $v$ is the root of that tube.

Then, the main result of \cite{BCEFNOHY} when restricted to the special case of \eqref{DSE k=1 s=-2} is that the contribution of a tubing $\tau$ to the power series solution of \eqref{DSE k=1 s=-2} is
\[
\phi_L(\tau) = c(\tau)\sum_{i=1}^{b(\tau)}c_{b(\tau)-i} \frac{L^i}{i!}.
\]
where $b(\tau) = b(r, \tau)$ for $r$ the root of the tree, and so the power series solution of \eqref{DSE k=1 s=-2} is
\begin{equation}\label{eq tubing soln}
G(x,L) = \sum_{\substack{t \text{ plane }\\\text{rooted tree}}} x^{|t|} \sum_{\tau \text{ tubing of }t}\phi_L(\tau),
\end{equation}
where $|t|$ is the number of vertices of $t$.

Power series solutions (called perturbative solutions in the physics context) to more general Dyson-Schwinger equations are similar but include a product of binomial factors for each tree and weights on the vertices for multiple Mellin transforms, see \cite{BCEFNOHY} for details.

The coefficient of $L^1$ in $G(x,L)$ is called the \emph{anomalous dimension}, $\gamma(x)$, an important function for physical applications.  The rest of $G(x,L)$ can be recovered from $\gamma(x)$ using the renormalization group equation, which in this context (see \cite{Ymem}) tells us that
\[
\Gamma_k(x) = \frac{1}{k} \gamma(x) \left( 1-2x \partial_x \right) \Gamma_{k-1}(x) 
\]
where $G(x,L) = 1+\sum_{k= 0}^\infty \Gamma_k(x)L^k$, and so $\Gamma_1(x) = \gamma(x)$.
Note that the anomalous dimension is particularly nice combinatorially since the $c$ factor for the root then has the same form as the $c$ factors for every other vertex.  Specifically,
\begin{equation}\label{eq anomalous dim}
\gamma(x) = \sum_{\substack{t \text{ plane }\\\text{rooted tree}}} x^{|t|} \sum_{\tau \text{ tubing of }t} \prod_{v\in V(t)} c_{b(v,\tau)-1}\,.
\end{equation}
In the next section, we use tree-tubing to give an explicit and constructive combinatorial interpretation of the  coefficients $g_n$ in the expansion of the anomalous dimension,
\begin{align} \gamma(x)=\sum_{n\geq 1} g_n\, x^n\,, \label{eq:gn} \end{align}
in the case where $F(\rho)$ is the reciprocal of a polynomial with nonnegative rational roots.

\section{Combinatorial interpretation of the anomalous dimension}

The perturbative solution of \eqref{eq tubing soln} is still not fully combinatorial because the $c_i$ coefficients are not given any combinatorial meaning. We will illustrate the combinatorial interpretation of these coefficients using the Yukawa example of \cite{bkerfc} and \cite{BD}.

\begin{example}\label{eg yukawa}
  In the Yukawa example of \cite{bkerfc} and \cite{BD} the Dyson-Schwinger equation is \eqref{DSE k=1 s=-2} with
  \[
  F(\rho) = \frac{1}{\rho(2-\rho)} = \sum_{i \geq 0} \frac{\rho^{i-1}}{2^{i+1}}\,,
  \]
  so the anomalous dimension \eqref{eq anomalous dim} in this case is
  \begin{align*} \gamma(x) & = \sum_{\substack{t \text{ plane }\\\text{rooted tree}}} x^{|t|} \sum_{\tau \text{ tubing of }t} \prod_{v\in V(t)} 2^{-b(v,\tau)} \\
 & = \sum_{\substack{t \text{ plane }\\\text{rooted tree}}} x^{|t|} \sum_{\tau \text{ tubing of }t} 2^{-\sum_{v \in V(t)}b(v,\tau)} \\
 & = \sum_{\substack{t \text{ plane }\\\text{rooted tree}}} x^{|t|}2^{1-2|t|} N(t)\,, \end{align*}
  where we used that $\sum_{v\in V(t)}b(v,\tau)$ is the number of tubes of $\tau$, namely $2|t|-1$, since every tube has a unique root and $N(t)$ is the number of tubings of $t$.

  After pulling out one factor of $2$ and scaling the coupling $x$ by $1/4$,
 this means that the anomalous dimension becomes simply the generating series for tubings.  By \cite{BCEFNOHY} Section 5 there is a size-preserving bijection between tubings of plane rooted trees and rooted connected chord diagrams.  Therefore, after scaling, the anomalous dimension in the Yukawa case is the generating series for rooted connected chord diagrams.  This was observed without proof in \cite{bk30}, and was an important insight in \cite{BD}, as this combinatorial interpretation for the anomalous dimension allowed for a resummation of all the instantons. For this resummation it was critical that the generating function of rooted connected chord diagrams fulfills a simple functional equation. A resurgence-inspired framework put forward by the first author based on generating functions and a differential operator that extracts asymptotic information from such functional relations \cite{Bgenfun} was critical to achieve this. 
This framework led to complete asymptotic information on the functional equation for the generating function of rooted connected chord diagrams and thereby to a resummable all-instanton solution.

\end{example}

Until now, a combinatorial explanation at this level was not known in the other cases.  The goal of this section is to give one and derive a system of differential equations from the interpretation.

Suppose now that the Mellin transform $F(\rho)$ is the reciprocal of a polynomial and has all its roots nonnegative and rational.  After factoring out the $1/\rho$ for the simple pole, all roots of the remaining polynomial will then be positive and rational.  Scaling appropriately, then, we can write such a Mellin transform in the following form 
\begin{equation}\label{eq nice mellin}
  F(\rho) = \frac{1}{\rho(1-a_1\rho)(1-a_2\rho)\cdots (1-a_m\rho)}\,,
\end{equation}
where $a_1, \ldots, a_m \in \mathbb{Z}$ and $0<a_1\leq a_2\leq \cdots \leq a_m$.  For the rest of the paper we will assume our Mellin transform is of this form. To return to a more physical scaling and thus agree with past work, see Remark~\ref{rem:rescaling}.  We know from elementary enumeration that the generating series for $k$-ary strings (that is, strings or words from an alphabet of $k$ letters) is $1/(1-kx)$.  We also know from elementary enumeration that if $A(x)$ and $B(x)$ are generating series for $\mathcal{A}$ and $\mathcal{B}$ respectively then $A(x)B(x)$ is the generating series for $\mathcal{A}\times \mathcal{B}$ where the size of an ordered pairs is the sum of the sizes of its two components~\cite[Part.~A]{FS}.  %
Additionally, suppose we have an ordered pair of a word on the alphabet $\Omega$ and a word on the alphabet $\Omega'$ where $\Omega\cap \Omega'=\emptyset$.  By concatenating the two words, we can view any such ordered pair as a single word on the alphabet $\Omega\cup \Omega'$ where all letters from $\Omega$ must appear before all letters from $\Omega'$.
Taking all this together we get that \eqref{eq nice mellin}, after multiplying by $\rho$ to remove the pole, 
is the generating series (in $\rho$) for words with letters from $\Omega_1\cup \Omega_2\cup\cdots \Omega_m$ where the $\Omega_i$ are disjoint alphabets, $\Omega_i$ has $a_i$ letters, and where the words have all $\Omega_i$ letters before any $\Omega_j$ letters for all $i<j$.  Given such $\Omega_i$, call a word on $\Omega_1\cup \cdots \cup \Omega_m$ where all letters from $\Omega_i$ occur before any letters from $\Omega_j$ for each $i<j$ an \emph{admissible} word.

Let $w_n$ be the number of admissible words of length $n$.
Using the interpretation of  $1/((1-a_1\rho)(1-a_2\rho)\cdots (1-a_m\rho))$ as the generating series for admissible words, calculate
\begin{align*} \gamma(x) & = \sum_{\substack{t \text{ plane }\\\text{rooted tree}}} x^{|t|} \sum_{\tau \text{ tubing of }t} \prod_{v\in V(t)} c_{b(v,\tau)-1} \\
 & = \sum_{\substack{t \text{ plane }\\\text{rooted tree}}} x^{|t|} \sum_{\tau \text{ tubing of }t} \prod_{v\in V(t)} w_{b(v,\tau)-1} \end{align*}

Note that we have a formula for $w_n$, but this will not be crucial in what follows.  Namely, by the required order on the letters in an admissible word, we can sum over all possibilities for which $\Omega_i$ each letter is in by summing over $1\leq i_1\leq i_2\leq \cdots \leq i_n \leq m$, and for each position then we simply need to take the size of the alphabet, giving 
\begin{equation}\label{eq number of admissible words}
w_n = \sum_{ 1\leq i_1\leq i_2\leq \cdots \leq i_n \leq m}a_{i_1}a_{i_2}\cdots a_{i_n}.
\end{equation}

Now we want to give a combinatorial understanding of $\prod_{v\in V(t)} w_{b(v,\tau)-1}$.

Let $\tau$ be a tubing of a plane rooted tree $t$.

Note that we can divide the tubes of $\tau$ into two kinds \emph{upper tubes} and \emph{lower tubes} in the following way: every tube other than the outermost tube comes about as half of a bipartition of some subtree.  In that bipartition one tube contains the root of the subtree, call this an upper tube, and one does not, call this a lower tube.  By convention take the outermost tube to be a lower tube.  

Furthermore, for any vertex $v$, the set of tubes for which $v$ is the root have a strictly nested structure giving them a linear order and the outermost tube for which $v$ is the root is a lower tube (either the lower tube for the bipartition corresponding to the edge from $v$ to its parent, or the outermost tube of the whole tubing if $v$ is the root of the whole tree), while all other tubes for which $v$ is the root are upper tubes.  Therefore $b(v, \tau)-1$ is the number of upper tubes of $\tau$ for which $v$ is the root.

Now label each upper tube with a letter from $\Omega_1\cup\cdots\cup \Omega_m$ with the restriction that whenever $i<j$ and we have a letter from $\Omega_i$ and a letter from $\Omega_j$ which label two tubes with the same root, then the letter from $\Omega_i$ is on a tube which is inside the tube with the letter from $\Omega_j$.  With these constraints, the letters on tubes with the same root give an admissible word when the letters are formed into a word by the innermost to outermost order of the tubes, and every way of assigning admissible words of length $b(v,\tau)$ to each vertex is obtained in this way.\footnote{We could just as well note that every vertex is in one tube of size $1$ in $\tau$ and so also $b(v,\tau)-1$ is the number of tubes of size $>1$ of $\tau$ for which $v$ is the root.  Thus we could replace ``upper tube'' with ``tube of size $>1$'' in the labeling by letters and obtain all the same results.  However, as shown in \cite{OHthesis}, the upper tubes are structurally nicer, and so, they are a preferable choice.}

Encapsulate such a labeling of the tubes of a tubing in a definition.
\begin{definition}
  Let $\Omega_1, \ldots \Omega_m$ be disjoint alphabets with $|\Omega_i|=a_i$.  
  Let $\tau$ be a tubing of a plane rooted tree $t$.  Call a labeling of the upper tubes of $\tau$ with letters from $\Omega_1\cup \cdots \cup \Omega_m$a \emph{compatible labeling} (or a $a_1, \ldots, a_m$-compatible labeling, or a $\Omega_1, \ldots, \Omega_m$-compatible labeling, when distinguishing the sizes of alphabets is necessary) if whenever $i<j$ and there is a letter from $\Omega_i$ and a letter from $\Omega_j$ which label two tubes with the same root, then the letter from $\Omega_i$ is on a tube which is inside the tube with the letter from $\Omega_j$.
\end{definition}
Let $L(\tau)$ be the number of compatible labelings of $\tau$.
Then continuing the calculation of $\gamma(x)$
\begin{align*} \gamma(x) & = \sum_{\substack{t \text{ plane }\\\text{rooted tree}}} x^{|t|} \sum_{\tau \text{ tubing of }t} \prod_{v\in V(t)} w_{b(v,\tau)-1} \\
 & = \sum_{\substack{t \text{ plane }\\\text{rooted tree}}} x^{|t|} \sum_{\tau \text{ tubing of }t} L(\tau). \end{align*}
In particular, $\gamma(x)$ is the generating series for compatible labelings of tubings of plane rooted trees.

Encapsulating the result, we just proved the following proposition:
\begin{prop}\label{prop nice mellin result}
  Let $\Omega_1, \ldots \Omega_m$ be disjoint alphabets with $|\Omega_i|=a_i$.  The anomalous dimension of the perturbative solution to \eqref{DSE k=1 s=-2} with Mellin transform \eqref{eq nice mellin} is the generating series for compatible labelings of tubings of plane rooted trees.
\end{prop}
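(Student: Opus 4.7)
The plan is to trace the chain of identifications already hinted at in the discussion preceding the statement, checking three combinatorial translations in turn. The starting point is formula \eqref{eq anomalous dim}, which expresses $\gamma(x)$ as a sum over tubings of the product $\prod_{v \in V(t)} c_{b(v,\tau)-1}$.

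First, I would identify the Laurent coefficients $c_i$ with the admissible-word counts $w_i$. Multiplying $F(\rho)$ by $\rho$ removes the simple pole and yields $\rho F(\rho) = \prod_{i=1}^{m} 1/(1-a_i\rho)$. Each factor $1/(1-a_i\rho)$ is the length generating series for words on $\Omega_i$. The product rule for generating series then enumerates ordered tuples $(w^{(1)},\ldots,w^{(m)})$ with $w^{(i)}$ a word on $\Omega_i$, and concatenating such a tuple in the prescribed order produces an admissible word bijectively (the alphabets are disjoint, so the block structure can be recovered uniquely). Hence $c_i = w_i$, and the inner product in \eqref{eq anomalous dim} becomes $\prod_{v} w_{b(v,\tau)-1}$.

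Second, I would interpret this product as a count of compatible labellings of upper tubes of $\tau$. The key structural observation, already noted in the excerpt, is that the tubes rooted at a fixed vertex $v$ form a strictly nested chain whose outermost member is a lower tube (the lower half of the bipartition associated with the edge from $v$ to its parent, or the outer tube of $\tau$ itself if $v$ is the root), and all remaining tubes rooted at $v$ are upper tubes. Thus $b(v,\tau)-1$ equals the number of upper tubes rooted at $v$, and the nesting endows these upper tubes with a natural linear order from innermost to outermost. Reading letters along this chain produces a word of length $b(v,\tau)-1$, and the compatibility constraint from the definition (letters of $\Omega_i$ inside letters of $\Omega_j$ whenever $i<j$) matches exactly the admissibility condition on that word. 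Ranging over vertices, compatible labellings of $\tau$ correspond to independent choices of an admissible word at each vertex, giving $L(\tau) = \prod_{v} w_{b(v,\tau)-1}$. Substituting back into \eqref{eq anomalous dim} yields the claimed generating-series identity.

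The main obstacle I anticipate is purely bookkeeping: making sure the nesting-order-to-word-order translation is consistent on both sides. Admissible words place small-index letters first, whereas compatibility places small-index letters on \emph{inner} tubes, so one must fix once and for all that ``innermost to outermost'' corresponds to ``first to last.'' Once that convention is pinned down, verifying that $b(v,\tau)-1$ really counts upper tubes at $v$ (as opposed to all nontrivial tubes at $v$) is immediate from the binary structure of tubings, and the rest of the argument is a routine application of the product and sum rules for generating series.
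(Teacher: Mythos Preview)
Your proposal is correct and follows essentially the same approach as the paper: the paper's proof is precisely the chain of identifications $c_i = w_i$ (via the product interpretation of $\rho F(\rho)$), then $\prod_v w_{b(v,\tau)-1} = L(\tau)$ (via the upper-tube count and the innermost-to-outermost reading of admissible words), assembled into the generating-series statement. The bookkeeping point you flag about matching the inner-to-outer order with the first-to-last position in the admissible word is exactly the convention the paper fixes in the paragraph introducing compatible labellings.
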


\begin{example}\label{eg phi3}
  Consider the $\phi^3$ example from \cite{bkerfc} and \cite{BDM}.  The Dyson-Schwinger equation is \eqref{DSE k=1 s=-2} with Mellin transform
  \[
  \frac{1}{\rho(3-\rho)(2-\rho)(1-\rho)}.
  \]
  This is a Mellin transform which can be put in the form of \eqref{eq nice mellin} after a suitable scaling of $\rho$.  Specifically,
  \begin{gather*} \frac{1}{\rho(3-\rho)(2-\rho)(1-\rho)} = \frac{1}{6\rho(1-\frac{1}{3}\rho)(1-\frac{1}{2}\rho)(1-\rho)} \\
 = \frac{1}{36 \rho'(1-2 \rho')(1-3\rho')(1-6\rho')} \end{gather*}
  which, after factoring out the $1/36$ is
  in the form of \eqref{eq nice mellin} with $a_1 = 2$, $a_2=3$, $a_3 = 6$, and with $\rho$ replaced by $\rho'=\rho/6$.  

  Proposition~\ref{prop nice mellin result} then tells us that, after these scalings,\footnote{See Remark~\ref{rem:rescaling} on reversing the scalings in order to compare with \cite{bkerfc} and \cite{BDM} directly.} the anomalous dimension for the perturbative solution to the Dyson-Schwinger equation is the generating series of compatible labelings of tubings of plane rooted trees with three alphabets, one with 2 letters, one with 3 letters and one with 6 letters with the compatibility restriction, so tubes with letters from smaller alphabets appear inside tubes with the same root and with letters from larger alphabets.

  In particular, $2,3,6$-compatible labelings of tubings of plane rooted trees give a combinatorial interpretation for the sequence given at the end of section 2 of \cite{BDM} which the authors observed had no known interpretation at the time.

  To make the tubing construction more concrete let us work out the first few terms explicitly.  The tree with a single vertex has a single tubing consisting of just one tube.  This tube is by convention a lower tube and so, it does not get a letter.  Therefore the coefficient of $x^1$ in $\gamma(x)$ is $1$ in this model.  There is one tree with two vertices and one tubing of it, illustrated on the left in Figure~\ref{fig example worked out}.  The only upper tube in this case is the tube containing only the root, so there is one upper tube rooted at the root and no upper tubes rooted at the other vertex. These numbers are illustrated in the figure.  Thus there is only one tube that gets a letter, it can have any of the $2+3+6=11$ letters, as we expect from \eqref{eq number of admissible words}, so the coefficient of $x^2$ in $\gamma(x)$ is $11$.  

  \begin{figure}
      \centering
      \includegraphics[scale=1.5]{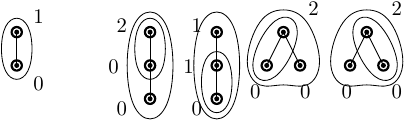}
      \caption{Tubings of rooted trees on two and three vertices with the number of upper tubes rooted at each vertex indicated.}
      \label{fig example worked out}
  \end{figure}

  Next consider the coefficient of $x^3$ in $\gamma(x)$.  There are two rooted trees on three vertices, each of which has two tubings as illustrated on the right in Figure~\ref{fig example worked out}.  The number of upper tubes rooted at each vertex is indicated in the figure.   In each case can count the number of ways to put letters at the upper tubes around each vertex via the formula of \eqref{eq number of admissible words}, but to keep this example explicit let us count directly. For the third tubing from the right, there are $11$ ways to label each upper tube since they are rooted at different vertices so there is no constraint on which letters can be chosen.  For each of the other three tubings we must label a pair of tubes with the same root.  If the outer of the two tubes is labelled by a letter from $\Omega_3$ then there is no constraint on the label of the inner of the two tubes.  If the outer of the two tubes is labelled by a letter from $\Omega_2$ then the label of the inner of the two tubes must be from $\Omega_2\cup \Omega_3$ while if the label of the outer of the two tubes is from $\Omega_1$ then the label of the outer of the two tubes must also be from $\Omega_1$.  Consequently there are $6\cdot 11 + 3 \cdot (2+3) + 2\cdot 2 = 85$ compatible labelings of such a tubing.

  Altogether, this gives that the coefficient of $x^3$ in $\gamma(x)$ is $11^2 + 3\cdot 85 = 376$, agreeing with the explicit calculations of \cite{bkerfc, BDM}.
\end{example}

As well as providing previously unknown combinatorial interpretations, Proposition~\ref{prop nice mellin result} also gives a system of differential equations for $\gamma(x)$ that can be used to compute its coefficients efficiently and which is well suited for the resurgent approach.

To build this system of differential equations, first consider how to recursively build tubings.  As discussed in \cite{BCEFNOHY}, by the definition of tubing, the outermost tube of any tubing is partitioned into subtubes and each of those subtubes along with all the tubes within determines a tubing of a connected subgraph of the original tree.  Furthermore, take any two tubings $\tau_1$ and $\tau_2$ of plane rooted trees $t_1$ and $t_2$ respectively and then choose an insertion place in $t_2$.  Adding an edge at the insertion point connecting to the root of $t_1$ builds a new tree $t$ and taking all the tubes of $\tau_1$ and $\tau_2$ along with the outer tube for $t$ gives a tubing $\tau$ of $t$, and every tubing of every rooted tree can be formed uniquely in this way.  Furthermore, if a plane rooted tree has $n$ vertices then it had $2n-1$ insertion places.  Therefore, if $N(n)$ is the number of tubings of rooted trees on $n$ vertices then
\[
N(n) = \sum_{i=1}^{n-1} N(n-i) (2i-1)N(i) \qquad N(1)=1
\]
or at the level of generating series, if $T(x)=\sum_{n\geq 1}N(n)x^n$ then
\begin{equation}\label{eq tube de}
T(x) = x + T(x) \left(2x\frac{d}{dx} -1\right)T(x)\,.
\end{equation}
The differential operator $2x\frac{d}{dx}-1$ serves to pull out the coefficient $2i-1$, and hence serves to count the insertion places.  This is a special case of the general interpretation of $x\frac{d}{dx}$ as a \emph{pointing operator} in enumerative combinatorics \cite[Ch.~A.I]{FS}.  The recurrence \eqref{eq tube de} for tubings of plane rooted trees was used in \cite{BCEFNOHY} to give a bijection between tubings of plane rooted trees and rooted connected chord diagrams, which are classically known to have the same recurrence \cite{NWchord}.

We will use this same construction but keep track of compatible labelings to prove the following theorem.
\begin{thm}\label{thm comb de}
  Let $\Omega_1, \ldots \Omega_m$ be disjoint alphabets with $|\Omega_i|=a_i$. Let $L_j(x)$ be the generating series for $\Omega_1, \ldots \Omega_m$-compatible labelings of tubings of plane rooted trees for which the label on the outermost tube comes from $\Omega_j$.   Then the anomalous dimension $\gamma(x)$ of the perturbative solution to \eqref{DSE k=1 s=-2} with Mellin transform \eqref{eq nice mellin} satisfies the following system of differential equations.
  \begin{align*} \gamma(x) & = x + \sum_{k=1}^m L_k(x)\,, \\
 L_j(x) & = a_j \gamma(x) \left(2x\frac{d}{dx}-1\right)\left(x + \sum_{i=1}^{j}L_i(x)\right) \qquad \text{for $1\leq j\leq m$}\,. \end{align*}
\end{thm}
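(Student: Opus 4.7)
The plan is to combine the recursive decomposition of tubings that underlies \eqref{eq tube de} with a careful tracking of the compatibility condition. The key structural observation is that compatibility is \emph{local}: it constrains only the labels of upper tubes sharing a common root vertex. So if I can identify, for each step of the recursion, which such constraints are inherited and which are newly created, the recurrence for $L_j(x)$ will read off directly.

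First I would recall the unique recursive construction used to derive \eqref{eq tube de}: any tubing $\tau$ of a plane rooted tree $t$ with $|t|\geq 2$ arises uniquely from an ordered pair $(\tau_1,\tau_2)$ of tubings of trees $t_1,t_2$ together with one of the $2|t_2|-1$ insertion places of $t_2$, where $t$ is obtained by grafting the root of $t_1$ at that insertion place and the tubes of $\tau$ are the tubes of $\tau_1$, the tubes of $\tau_2$, and the new outer tube $V(t)$. Crucially, in $\tau$ the tube $V(t_2)$, which was the outer lower tube of $\tau_2$, is now the upper half of the outermost partition of $V(t)$ and hence becomes an upper tube of $\tau$; every other upper tube of $\tau$ is an upper tube of $\tau_1$ or of $\tau_2$, with unchanged root vertex and unchanged nesting.

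Next I would upgrade the decomposition to compatibly labelled tubings by inspecting the compatibility constraints vertex by vertex. At any vertex that is not the root of $t$, the set of upper tubes rooted there coincides with a set already present in $\tau_1$ or in $\tau_2$, so the constraint is inherited verbatim. At the root of $t$ (the root of $t_2$), the upper tubes of $\tau$ rooted there are $V(t_2)$ together with all upper tubes of $\tau_2$ rooted at the root of $t_2$, all of which sit strictly inside $V(t_2)$. Therefore, if $V(t_2)$ is labelled from $\Omega_j$, the admissibility condition forces the outermost upper tube of $\tau_2$ at its root (if there is one) to carry a label in $\Omega_1\cup\cdots\cup\Omega_j$, while admissibility inside $\tau_2$ takes care of all deeper tubes automatically. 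The extra constraint beyond independent compatibility of $\tau_1$ and $\tau_2$ is thus precisely that $\tau_2$ either has a single vertex or has its outermost upper tube labelled from $\Omega_{\leq j}$, i.e., that $\tau_2$ contributes to $x + \sum_{i=1}^{j} L_i(x)$. Meanwhile $\tau_1$ is entirely unconstrained, since $V(t_1)$ becomes a lower (unlabelled) tube and the root of $t_1$ gains no new upper tubes in $\tau$, so $\tau_1$ contributes the full $\gamma(x)$.

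Translating this bijection into generating functions gives the claimed recurrence: a letter for $V(t_2)$ supplies $a_j$, the pair $(\tau_1,\tau_2)$ with the insertion-place choice supplies $\gamma(x)\cdot\bigl(2x\tfrac{d}{dx}-1\bigr)\bigl(x+\sum_{i=1}^{j} L_i(x)\bigr)$ in exact analogy with \eqref{eq tube de}, and summing $j$ together with the single-vertex term $x$ yields $\gamma(x) = x + \sum_{k=1}^m L_k(x)$ by partitioning compatibly labelled tubings by the alphabet of their outermost upper tube's label. The main point requiring care is the bookkeeping at the root of $t_2$: one must verify that $V(t_2)$ is genuinely the outermost upper tube of $\tau$ rooted there, that no compatibility condition of $\tau_2$ is silently lost when $V(t_2)$ switches from lower to upper, and that the admissibility of the root word in $\tau$ collapses to the single inequality controlling the top label of $\tau_2$. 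Once this is checked, the rest is a standard generating-function encoding of a recursive combinatorial bijection.
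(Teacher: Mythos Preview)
Your proposal is correct and follows essentially the same approach as the paper's proof: both use the recursive decomposition of a tubing into $(\tau_1,\tau_2)$ plus an insertion place, identify the former outer tube $V(t_2)$ as the new outermost upper tube needing a label, observe that choosing this label from $\Omega_j$ constrains only the outermost label of $\tau_2$ to lie in $\Omega_{\leq j}$ while $\tau_1$ remains unconstrained, and translate this via the insertion-place operator $2x\tfrac{d}{dx}-1$. Your exposition makes the locality of the compatibility condition and the root-vertex bookkeeping more explicit, but the argument is the same.
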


\begin{proof}
  As before let $\tau_1$ and $\tau_2$ be compatibly labelled tubings of plane rooted trees $t_1$ and $t_2$ respectively and choose an insertion place in $t_2$.  Construct $\tau$ and $t$ as above by inserting $t_1$ into $t_2$ at the chosen insertion place.  The labelings of the tubes of $\tau_1$ and $\tau_2$ remain compatible in $\tau$.  The only thing missing is a label for the upper tube of the partition of the outermost tube, which had previously been a lower tube and is now an upper tube.  Call this label the outermost label of $\tau$.  The outermost label of $\tau$ is constrained to be from an alphabet $\Omega_j$ with $j\geq i$ where the outermost label of $\tau_2$ is from $\Omega_i$, or if $t_2$ is a single vertex and so has no labelled tubes then the outermost label of $\tau$ is unconstrained.  Furthermore, this is the only constraint on the outermost label of $\tau$ and every compatible labeling can be obtained uniquely in this way. The labels from $\tau_1$ put no constraint on the outermost label of $\tau$ since they do not share roots.

  As above we can encode this decomposition by generating series.  Every compatible labeling of a tubing of a plane rooted tree has its outermost label from exactly one of the alphabets except for the tree with one vertex and so
  \[
  \gamma(x) = x + \sum_{k=1}^m L_k(x)\,.
  \]
  For a compatible labeling with the outermost label from $\Omega_j$, the decomposition described above tells us we have $a_j$ choices for the outermost label, we can take any compatible labeling of a tubing of a plane rooted tree for $\tau_1$, hence giving a factor of $\gamma(x)$ in the generating series, and we can take either a compatible labeling of a tree with one vertex or any compatible labeling with outermost label from $\Omega_1\cup\cdots\cup\Omega_j$ for $\tau_2$, where as before the differential operator $2x\frac{d}{dx}-1$ incorporates the count of the number of insertion places, and so
\[
L_j(x) = a_j \gamma(x) \left(2x\frac{d}{dx}-1\right)\left(x+\sum_{i=1}^{j}L_i(x)\right),
\]
as desired.
\end{proof}

\begin{remark}
Very similar systems of equations were already known in the work of Marc Bellon, see equations 26 and 27 of \cite{Befficient}. There are minor differences, with the different equations of the system in Bellon's case corresponding to different variables in the Mellin transform, one pole per variable, while we get the different equations in our systems from different poles in the same variable, leading to our triangular form. However, the two contexts agree on their common intersection, and so, despite being developed independently, Bellon's work is an important predecessor of ours, and one which already appreciated the suitability of such systems for investigating asymptotics.

The main novelty of our approach is not so much the equations of Theorem~\ref{thm comb de} themselves, but the fact that the combinatorial interpretation means that the equations fall out automatically from the combinatorial structure -- they are essentially a trivial rewriting of the recursive structure of tubings -- and that each piece of the equation has a combinatorial meaning, providing insight into both specializations and generalizations. 
\end{remark}

\begin{cor}
\label{cor:nonlinear}
The set of differential equations in Theorem~\ref{thm comb de} can be combined into a single nonlinear ODE
\begin{eqnarray} \prod_{j=1}^m \left[1-a_j \gamma(x)\left(2x\frac{d}{dx}-1\right)\right]\gamma(x)=x\,. \label{eq:ode} \end{eqnarray}
Up to scalings, this is the form in which they were studied previously in \cite{bkerfc,bk30,BD,BDM,BB}. The comparison with other scalings in the literature for the cases in Examples~\ref{eg de}--\ref{eg in} is discussed below in Remark~\ref{rem:rescaling}.
\end{cor}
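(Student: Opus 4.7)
The plan is to telescope the recursive system in Theorem~\ref{thm comb de} into a single operator identity applied to $\gamma(x)$. To set things up, I would introduce the shorthand $D = 2x\frac{d}{dx}-1$ and the partial sums
\[
S_0(x) = x, \qquad S_j(x) = x + \sum_{i=1}^{j} L_i(x) \quad \text{for } 1\le j\le m,
\]
so that $S_m(x) = \gamma(x)$ by the first equation of the theorem. The second equation then reads $L_j = a_j\,\gamma\, D\, S_j$.

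Next, I would observe that $S_j = S_{j-1} + L_j = S_{j-1} + a_j\,\gamma\, D\, S_j$, which rearranges to the key local identity
\[
\bigl(1 - a_j\,\gamma\, D\bigr)\, S_j \;=\; S_{j-1}, \qquad 1\le j\le m.
\]
Iterating this from $j=m$ down to $j=1$, starting with $S_m = \gamma$, gives
\[
x \;=\; S_0 \;=\; \bigl(1 - a_1\,\gamma\, D\bigr)\bigl(1 - a_2\,\gamma\, D\bigr)\cdots \bigl(1 - a_m\,\gamma\, D\bigr)\,\gamma,
\]
which is the claimed ODE \eqref{eq:ode} once we know the product of operators makes sense order-independently.

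To justify writing the product with no specified order (as in the statement), I would point out that all the operators in the product are of the form $1 - a_j B$ with the same operator $B = \gamma D$ and only the scalars $a_j$ varying. Two such factors satisfy
\[
(1 - a_i B)(1 - a_j B) \;=\; 1 - (a_i + a_j)\, B + a_i a_j\, B^2,
\]
which is symmetric in $i$ and $j$, so the factors commute pairwise and the product is well-defined as written. The whole argument is essentially a telescoping; the only mild subtlety is checking this commutativity, so that the unordered product in \eqref{eq:ode} is unambiguous. Nothing else in the calculation is delicate, and comparison with the references for the specific small-$m$ cases (as will be done in Remark~\ref{rem:rescaling}) confirms that the resulting ODE matches the forms previously studied.
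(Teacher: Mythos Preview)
Your argument is correct and is essentially the paper's own proof: both introduce the partial sums $S_j = x + \sum_{i\le j} L_i$ (the paper calls them $\gamma_j$), derive the local identity $(1 - a_j\,\gamma\,D)\,S_j = S_{j-1}$, and telescope down to $S_0 = x$. Your additional remark on the pairwise commutativity of the factors $1 - a_j\,\gamma D$ is a nice clarification that the paper leaves implicit, but the core approach is identical.
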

\begin{proof}
    Let $\gamma_j:=x+\sum_{k=1}^j L_k$. For $j=1$, we have $x=\gamma_1-L_1$, and since $$L_1=a_1\gamma \left(2x\frac{d}{dx}-1\right)\gamma_1,$$ we have $x=\left[1-a_1\gamma \left(2x\frac{d}{dx}-1\right)\right]\gamma_1$. Similarly, $\gamma_{j+1}-L_{j+1}=\gamma_{j}$ implies $$\left[1-a_{j+1}\gamma \left(2x\frac{d}{dx}-1\right)\right]\gamma_{j+1}=\gamma_j.$$ The result follows by induction, noting that $\gamma_m=\gamma$,  by definition.
\end{proof}

\begin{remark}
\label{remark gamma program}
An advantage of the differential equation system form in 
Theorem~\ref{thm comb de} is that it provides a convenient way to compute the coefficients of the power series $\gamma(x)=\sum_{n\geq 1} g_n x^n$.
To do so, we substitute power series ansätze for $\gamma(x)$ and $L_j(x)$ into the differential equation system. This yields a system of nonlinear difference equations for the coefficients that can be solved recursively
(see Section~\ref{subsubchain} for a simple example of this procedure). We implemented this recursion as the python program \texttt{gamma.py} which is included in the ancillary files to the arXiv version of this article.
In a short time, this program computes the numbers $g_n$ for various input parameters $a_1,\ldots,a_m \in \Z$ up to $n\approx 1000$.
\end{remark}

As mentioned in the introduction, special cases of the differential equations system given by Theorem~\ref{thm comb de} or, equivalently, the nonlinear ODE in Corollary~\ref{cor:nonlinear}
were studied in \cite{bkerfc, BD, BDM, BB}. We call them the \emph{Yukawa example}, the \emph{$\phi^3$ example}, and \emph{the intermediate example}.
The precise matching of the present prefactor and scaling conventions with the ones of previous articles is discussed below in Remark~\ref{rem:rescaling}.

\begin{example}\label{eg de}
  For the Yukawa example, continuing Example~\ref{eg yukawa}, after rescaling to the form of~\eqref{eq nice mellin} we get $a_1=1$ and $m=1$.  The system of differential equations then clearly simplifies into the single differential equation studied in \cite{bkerfc,bk30,BD}:
\[
 \left[1-\gamma_{\rm Yukawa}(x)\left(2x\frac{d}{dx}-1\right)\right]\gamma_{\rm Yukawa}(x)=x\,.
\]
Its unique power series solution (see 
\cite[\href{https://oeis.org/A000699}{A000699}]{oeis})
starts with 
\begin{eqnarray} \gamma_{\rm Yukawa}(x) = \sum_{n \geq 1} g^{\rm Yukawa}_n x^n = x +x^2 +4 x^3+27 x^4+\dots   \label{eq:hopfm1} \end{eqnarray}
This particular example is the only case, where a complete description of the asymptotic behavior of the coefficients $g^{\rm Yukawa}_n$ is available \cite{Bgenfun,BD}. 
The numbers $g^{\rm Yukawa}_n$ count {\it connected chord diagrams}, which follows from writing the ODE 
that $\gamma_{\rm Yukawa}(x)$ fulfills as a difference equation for the coefficients $g^{\rm Yukawa}_n$ (see \cite{NWchord} for the combinatorial analysis of this difference equation).
In Section~\ref{sec:asymptotics}, we discuss the asymptotic behavior of the coefficients of general power series solutions to the differential equation system in Theorem~\ref{thm comb de}. 

To describe the asymptotic behavior of sequences such as $g^{\rm Yukawa}_n$, we need some notation. For a number sequence $f_n$ and a family of sequences $c_n^{(0)},c_n^{(1)},\ldots$, the \emph{Poincar\'e asymptotic expansion notation}
`$ f_n \sim \sum_{k \geq 0} c_n^{(k)} \text{ as } n\rightarrow \infty $'
means that 
$$
\limsup_{n \rightarrow \infty} \left| \frac{f_n - \sum_{k=0}^{R-1} c_n^{(k)}}{c_n^{(R)}}\right| < \infty \text{ for all } R \geq 0.
$$
The asymptotic expansion of the coefficients $g_n^{\rm Yukawa}$ starts as
\[
    g_n^{\rm Yukawa}\sim \frac{1}{e} \frac{2^{n+1/2} \Gamma\left(n+\frac{1}{2}\right)}{\sqrt{2\pi}}\left(1-\frac{5/4}{n-1/2}-\dots\right)\qquad  \text{ as } n \rightarrow \infty
,
\]
where the classical $\Gamma$-function is defined by the integral $\Gamma(z) = \int_0^\infty y^{z-1} e^{-y} \dd y$. 

This is one of the rare cases where the prefactor $\frac{1}{e}$ is known exactly thanks to 
a functional equation that the associated generating function fulfills (see \cite[\S~7.1]{Bgenfun}).

\end{example}

\begin{example}
  For the $\phi^3$ example, continuing Example~\ref{eg phi3}, after rescaling to the form of \eqref{eq nice mellin} we have $a_1=2$, $a_2=3$, $a_3=6$, $m=3$. The system of differential equations in Theorem~\ref{thm comb de} reduces to
  \begin{align*} \gamma(x) & = x + L_1(x) + L_2(x) + L_3(x) \\
 L_1(x) & = 2 \gamma(x) \left(2x\frac{d}{dx}-1\right)\left(x + L_1(x)\right) \\
 L_2(x) & = 3 \gamma(x) \left(2x\frac{d}{dx}-1\right)\left(x + L_1(x) + L_2(x)\right) \\
 L_3(x) & = 6 \gamma(x) \left(2x\frac{d}{dx}-1\right)\left(x + L_1(x) + L_2(x) + L_3(x)\right) \end{align*}
  This system is equivalent to 
  a third-order nonlinear ODE with quartic nonlinearity as in Corollary~\ref{cor:nonlinear}:
\[
\left[1-2 \gamma(x)\left(2x\frac{d}{dx}-1\right)\right]\left[1-3 \gamma(x)\left(2x\frac{d}{dx}-1\right)\right]\left[1-6\gamma(x)\left(2x\frac{d}{dx}-1\right)\right]\gamma(x)=x\,.
\]
The first coefficients of the unique power series solution 
(see \cite[\href{https://oeis.org/A051862}{A051862}]{oeis})
are
\begin{eqnarray} \gamma(x)= x+11x^2 +376x^3+20241 x^4+ \dots   \label{eq:hopf236} \end{eqnarray}

\end{example}
\begin{example} \label{eg in}
  For the intermediate example of \cite{BB} the Mellin transform is $1/(\rho(2-\rho)(1-\rho))$. So, after rescaling to the form of \eqref{eq nice mellin} we have $a_1 = 1$, $a_2=2$, $m=2$. In this case the system is
    \begin{align*} \gamma(x) & = x + L_1(x) + L_2(x) \\
 L_1(x) & = \gamma(x) \left(2x\frac{d}{dx}-1\right)\left(x + L_1(x)\right) \\
 L_2(x) & = 2 \gamma(x) \left(2x\frac{d}{dx}-1\right)\left(x + L_1(x) + L_2(x)\right)\,. \end{align*}
    It is equivalent to a second-order ODE with cubic nonlinearity,
\[
\left[1- \gamma(x)\left(2x\frac{d}{dx}-1\right)\right]\left[1-2 \gamma(x)\left(2x\frac{d}{dx}-1\right)\right]\gamma(x)=x\,,
\]
whose unique power series solution starts with
\begin{eqnarray} \gamma(x)= x+3x^2 +30x^3+483 x^4+ \dots   \label{eq:hopf12} \end{eqnarray}
\end{example}

\begin{remark} Examples \ref{eg de}--\ref{eg in} agree with previous works \cite{bkerfc,bk30,BD,BDM,BB} after suitable rescalings of both $\gamma$ and $x$.
For example, Eq. (11) in \cite{BD} has $\tilde{\gamma}(\alpha)=2\gamma(-\alpha/4)$ with $\gamma(x)$ as in Eq. \eqref{eq:hopfm1} here. To compare with \cite{bkerfc,bk30,BDM} we have 
$\tilde{\gamma}(\alpha)=6\gamma(-\alpha/36)$ with $\gamma(x)$ as in Eq. \eqref{eq:hopf236} here. To compare with Eq. (29) in \cite{BB} we have 
$x\,\tilde{g}_0(x)=2\gamma(x/4)$ with $\gamma(x)$ as in Eq. \eqref{eq:hopf12} here.
    \label{rem:rescaling}
\end{remark}

\begin{remark}[Holonomicity of $\gamma$ in Theorem~\ref{thm comb de}]
Holonomic or $D$-finite functions and, equivalently, $P$-recursive sequences have an important role in algebraic combinatorics. 
The case $m=1$ (i.e.~the case discussed in Example~\ref{eg de}) of Theorem~\ref{thm comb de} provides one of the rare instances when a sequence is proven not to be $P$-recursive \cite{MR1979786}. 
It is reasonable to suspect that $\gamma$ is non-holonomic for most sets of parameters $a_1,\ldots,a_m\in \Z$.
\end{remark}

\subsection{Rainbows, chains, and other approximations}
\label{sec:approxs}
Certain further restrictions on the class of Feynman diagrams used in the approximation to the perturbative expansion that have been previously considered in the physics literature and specifically in previous analysis of \eqref{DSE k=1 s=-2} in \cite{BD} are the \emph{rainbow} and \emph{chain} approximations, the latter of which is important for understanding what are known in physics as \emph{renormalons}.  However, while these two approximations are simpler and so not as hard to understand in and of themselves, it was not entirely clear how to understand the differences between the behavior of these approximations since they did not live within a common framework.

One benefit of the system of differential equations from Theorem~\ref{thm comb de} is that the rainbow and chain approximations (and other potential approximations of less clear physical significance) inherit from the combinatorial interpretation by compatibly labelled tubings their own systems of differential equations obtained by suitably modifying the system of Theorem~\ref{thm comb de}.  This makes it easy to read off the key differences and similarities between these approximations and see how these differences affect the growth rate of the coefficients. 

The reason that this happens is that both the rainbow and the chain approximation are obtained by restricting the Feynman diagrams that contribute in a way that corresponds to restricting the shape of either the inserted tree or the insertion place into which it is inserted when forming the tubing (as described in the proof of Theorem~\ref{thm comb de}).  Any approximation that corresponds to restricting the shape of the inserted tree, the tree being inserted, or the location of the insertion place inherits a suitably restricted systems of differential equations.  Here it is important to notice that the rooted tree being tubed is not just any rooted tree but is the insertion tree showing the tree structure of the insertions of the 1-loop graph into itself (see \cite{BCEFNOHY}).

\subsubsection{Rainbow approximation}
For example, for rainbow approximations, the insertion trees are ladders, that is trees where each vertex has at most one child.  This is a restriction on the possible insertion places: there is exactly one insertion place, at the bottom of the ladder.  The ladder structure itself is then built recursively with no further restrictions needed.  Consequently, the only change required in the system of differential equations is that the differential operator is removed, so we have for the rainbow approximation associated to the Mellin transform in \eqref{eq nice mellin}:
\begin{align*} \gamma_{\rm rainbow}(x) & = x + \sum_{k=1}^m L_k^{\rm rainbow}(x)\,, \\
 L_j^{\rm rainbow}(x) & = a_j \gamma_{\rm rainbow}(x) \left(x + \sum_{i=1}^{j}L_i^{\rm rainbow}(x)\right) \qquad \text{for $1\leq j\leq m$}\,. \end{align*}

These are purely algebraic equations.  Let us consider the particular case\footnote{For the approximations discussed in Section \ref{sec:approxs}, we are particularly interested in the different overall growth behavior of the coefficients of the expansion and the $m=1$ case is sufficient to illustrate this difference with tidy, relatively easy to understand equations.} 
when $m=1$.  In this case the algebraic (albeit nonlinear) equation is %
$\gamma(x)(1- a_1 \gamma(x))=x$, with unique solution (scaling to $a_1=1$ for convenience)
\[
\gamma_{\rm rainbow}(x)=\frac{1-\sqrt{1-4 x}}{2}=\sum_{n\geq 1} C_{n-1} x^{n}\,.
\]
This expansion has a finite radius of convergence, $\frac{1}{4}$, and the coefficients are expressed in terms of the Catalan numbers $C_n$.

\subsubsection{Chain approximation}
\label{subsubchain}
For the chain approximation (as inserted into one outer graph), the insertion trees are corollas, that is trees where every child of the root is a leaf.  This is a restriction on the form of the tree being inserted -- only single vertex trees can be inserted -- so the factor $\gamma(x)$ is replaced by $x$.  Additionally, the insertion places must be at the root, so for a corolla with $n$ vertices, we have $n$ insertion places (before all $n-1$ children of the root, between any two adjacent children, and after all children).  This results in changing the differential operator from $\left(2x\frac{d}{dx}-1\right)$ to $x\frac{d}{dx}$.  Altogether, the system of differential equations for the chain approximation associated to the Mellin transform in~\eqref{eq nice mellin} is
  \begin{align*} \gamma_{\rm chain}(x) & = x + \sum_{k=1}^m L_k^{\rm chain}(x)\,, \\
 L_j^{\rm chain}(x) & = a_j x \left(x\frac{d}{dx}\right)\left(x + \sum_{i=1}^{j}L_i^{\rm chain}(x)\right) \qquad \text{for $1\leq j\leq m$}\,. \end{align*}
This is a system of {\it linear} ODEs. For $m=1$ the equation is $\gamma_{\rm chain}(x)-a_1 x^2\, \gamma_{\rm chain}^\prime(x)=x$.
We can find the power series solution to this ODE 
by substituting the ansatz $\gamma_{\rm chain}(x) = \sum_{n\geq 1} g^{\rm chain}_n x^n$ into the ODE.
This immediately results in the difference equation
$ g^{\rm chain}_n= a_1 n g^{\rm chain}_{n-1} $
for all $n\geq 2$, with the initial 
condition $g^{\rm chain}_1 = 1$.
If we fix $a_1=1$, we get the power series solution: %
\[
    \gamma_{\rm chain}(x)= 
     \sum_{n\geq 1} g^{\rm chain}_n x^{n}=
    \sum_{n\geq 1}(n-1)! \, x^{n}\,.
\]
The right-hand-side is the asymptotic expansion of $-e^{-1/x}\Gamma(0,-1/x)$ as $x\to 0^+$, where $\Gamma(a, b)$ is the incomplete gamma function. Notice that unlike the rainbow approximation the expansion coefficients $g_n^{\rm chain}$ are factorially divergent. 
Stirling's formula for the asymptotic growth rate of $n!$ shows that the exponentially growing prefactor in the asymptotic expansion of $g_n^{\rm chain}$ is half the one of the original $a_1=1$ Yukawa case in \eqref{eq:hopfm1}.
From a resurgence perspective, this means that the  associated Borel singularity of $\gamma_{\rm chain}$ is twice as far from the origin as for \eqref{eq:hopfm1}
(see \cite{BD}).

Comparing the systems for the rainbow and chain approximations we can see immediately why they behave so differently from the more general approximation of Theorem~\ref{thm comb de}.  For the rainbow the differential operator is gone, while for the chain there is a small modification of the differential operator compared to Theorem~\ref{thm comb de}, but the key difference is that one of the component series is truncated to just $x$.  

The fact that the chain approximation has a differential equation where the appearance of the anomalous dimension on the right hand sides is replaced by the truncation to $x$ was observed in \cite{Bdsems} (see equation 7.2 and below) in that case phrased at the level of the Green function and hence giving an equation like the renormalization group equation but with this truncation of the anomalous dimension. This also underlies the special role of chains (there called corollas) in \cite{DFY}.

\subsubsection{Other intermediate approximations}\label{sec other intermediate approx}

There are other approximations with natural combinatorial interpretation.  For instance, if we change the differential operator from $2x\frac{d}{dx}-1$ to $x\frac{d}{dx}$ but keep the $\gamma(x)$ factor, then we have an approximation that is intermediate between the full system and the chain approximation:
  \begin{align*} \gamma_{\rm odd}(x) & = x + \sum_{k=1}^m L_k^{\rm odd}(x)\,, \\
 L_j^{\rm odd}(x) & = a_j \gamma_{\rm odd}(x) \left(x\frac{d}{dx}\right)\left(x + \sum_{i=1}^{j}L_i^{\rm odd}(x)\right) \qquad \text{for $1\leq j\leq m$}\,. \end{align*}
In this case, we can interpret the equation as telling us we insert only into odd numbered insertion places and no other restrictions compared to the full system.  A quick induction shows that this gives all plane rooted trees, but the insertion places are exactly those insertion places under vertices at even level in the tree (where the root is at level 0) and so when a tube is bipartitioned, the bipartition is always by cutting an edge which is under a vertex at even level, relative to the subtree induced by the tube under consideration.  Rephrasing this in terms of Feynman diagrams, we build the same diagrams as in the full system, but each diagram's contribution is somewhat simplified as only some tubings contribute.

This is a system of nonlinear ODEs and for $m=1$ the equation is the nonlinear ODE $\gamma_{\rm odd}(x)(1-a_1 x\, \gamma_{\rm odd}^\prime(x))=x$, with a unique factorially divergent power series solution (again scaling to $a_1=1$):
\[
    \gamma_{\rm odd}(x)= \sum_{n\geq1} g^{\rm odd}_n x^n = x+x^2+3x^3+14x^4+85x^5+621x^6+\dots
\]
The coefficients $g^{\rm odd}_n$ are listed in the OEIS as \cite[\href{https://oeis.org/A088716}{A088716}]{oeis}
and they grow factorially. It is interesting to note that this series appears in the study of exponential Dyson-Schwinger equations~\cite{PHB}.

The asymptotic expansion of the numbers $g^{\rm odd}_n$ starts as
\[
    g^{\rm odd}_n\sim S\, (n+1)!\left(1-\frac{3}{n+1}-\frac{\frac{5}{2}}{n(n+1)}-\dots\right) \text{ as } n \rightarrow \infty\,,
\]
with the prefactor $S=0.217950789447151...$ which is only known numerically (see, e.g.,~\cite[Appendix~A]{BDM} for a brief explanation on how to obtain such constants empirically). 
These coefficients have a similar, but faster, factorial rate of growth compared to the chain approximation.
As with the chain approximation, the associated Borel singularity is twice as far from the origin as the leading one of the original approximation for the Yukawa case, indicating slower growth by a factor of $2^{-n}$. 

Another intermediate approximation is obtained by taking the chain approximation and restoring the differential operator to $2x\frac{d}{dx}-1$:
  \begin{align} \gamma_{\rm chord}(x) & = x + \sum_{k=1}^m L_k^{\rm chord}(x)\,, \\
 L_j^{\rm chord}(x) & = a_j x \left(2x\frac{d}{dx}-1\right)\left(x + \sum_{i=1}^{j}L_i^{\rm chord}(x)\right) \qquad \text{for $1\leq j\leq m$}\,. \label{eq:approx2} \end{align}
In this case we can interpret the equation as telling us that we insert only single vertices (counted by the leading factor of $x$) but we can insert them into any location in the tree.  Again we get all plane rooted trees, but with a restriction on which tubings contribute, namely the only tubings that contribute are those where each lower tube (excepting the outermost tube) is a single vertex; these are called \emph{leaf tubings} in \cite{BCEFNOHY}.  In terms of Feynman diagrams, as with the other intermediate approximation we build the same diagrams as in the full system, but again each diagram's contribution is somewhat simplified as only some tubings contribute, the specific set of tubings which contribute being different between the two intermediate examples.

This approximation produces a system of {\it linear} ODEs. For $m=1$, there is just one linear ODE: $(1+a_1\, x)\gamma_{\rm chord}(x)-2a_1 x^2 \gamma_{\rm chord}^\prime(x)=x$, with a unique factorially divergent power solution (scaling to $a_1=1$):
\begin{eqnarray} \gamma_{\rm chord}(x)  = \sum_{n=1}^\infty \frac{2^{n-\frac{1}{2}}\Gamma\left(n-\frac{1}{2}\right)}{\sqrt{2\pi}}\, x^{n}\,. \label{eq:approx2b} \end{eqnarray}
This is the generating function for all chord diagrams, and the formal series is the asymptotic expansion of $\sqrt{\frac{-x}{2}}e^{-\frac{1}{2x}}\Gamma\left(\frac{1}{2}, -\frac{1}{2x}\right)$.  This contrasts with the Yukawa example (Example~\ref{eg de}) where instead we have the generating function for connected chord diagrams of size one larger.  These two examples have almost the same growth rate with the Yukawa coefficients growing only linearly faster than those of $\gamma_{\rm chord}$.

The second intermediate approximation has an additional nice property.  Since every lower tube is a single vertex in this approximation, all the upper tubes are rooted at the root of the tree.  Therefore, the number of compatible labelings of such a tubing of a tree $t$ is simply the number of admissible words of length $|t|-1$. In particular the number of compatible labelings does not depend on the shape of the tree or on the specific tubing (among those tubings under consideration at all in this approximation).  Consequently, the coefficient of $x^{n}$ in the anomalous dimension factors into the product of the number of admissible words of length $n-1$ and the number of leaf tubings of rooted trees of size $n$.  The first of these factors does not depend on the trees or tubings and the second of these factors does not depend on the Mellin transform of the primitive, so in this example these different contributions separate into different factors at each order.

The number of leaf tubings of a fixed rooted tree is the number of decreasing labelings (or equivalently the number of increasing labelings) of that tree, see Lemma 3.1 of \cite{BCEFNOHY}.  Increasing labelings of a tree are well-studied objects in combinatorics \cite{bergeron_varieties_1992}.  However, as we run over all trees, leaf tubings are even easier to count.  Given all leaf tubings of all plane rooted trees of size $n$, we can obtain all leaf tubings of all rooted trees of size $n+1$ by inserting a single new leaf in its own tube (and the corresponding new overall tube) in all possible locations in each tubing of each tree.  A rooted tree on $n$ vertices has $2n-1$ insertion places, so inductively, we obtain that there are $(2n-1)!!$ leaf tubings of plane rooted trees of size $n+1$.  This is the same as the number of rooted chord diagrams with $n$ chords.

\subsection{Interpretations for other combinatorial Mellin transforms}

In terms of physically interesting examples, it would be nice to have fully combinatorial interpretations for cases beyond Mellin transforms which are reciprocals of polynomials with all roots nonnegative and rational.  There is a bit that we can do in this direction.

With negative rational roots, one can clear denominators as before and only need to consider additionally the presence of factors of the form $(1-b_ix)$ in the denominator, with $b_i<0$. Taking an alphabet $\Omega_i'$ with $|b_i|$ letters, we can interpret as before, but now there is an overall sign that counts the parity of the number of letters from primed alphabets in the entire compatible labeling.  This gives an interpretation, but note that it is blind to any cancellations of terms with opposite sign since this interpretation doesn't see the signs as signs per se, but merely as another counting variable keeping track of the number of letters from the primed alphabets.

Finite sums of Mellin transforms are also interpretable since the Dyson-Schwinger equation is linear in the Mellin transforms, so we simply interpret each summand and add.  Again if any of these interpretations include signs, we will remain blind to cancellations.

With these observations in mind we can tackle any rational functions with rational poles.

\medskip

In a slightly different direction, if the Mellin transform is the generating series for some combinatorial class $\mathcal{C}$, then by the same arguments used for the form \eqref{eq nice mellin}, we can interpret the perturbative solution for the anomalous dimension as counting tubings of plane rooted trees where the tubes with a common root are given the structure of a $\mathcal{C}$-object.  This is the same notion of putting a $\mathcal{C}$ structure on something that occurs in the interpretation of composition of generating series \cite[Part.~A]{FS}.  %

\section{Asymptotic resurgent analysis}
\label{sec:asymptotics}

For applications in both physics and combinatorics, it is interesting to know the asymptotic growth rate of the coefficients $g_n$ of the power series $\gamma(x)$ in \eqref{eq:gn}. The previously studied examples \cite{bkerfc,bk30,BD,BDM} showed that the coefficients $g_n$ often grow factorially in magnitude and exhibit characteristic resurgent asymptotic behavior. In this section, we prove constraints on this asymptotic behavior directly from the combinatorial tree-tubing result in Theorem~\ref{thm comb de}. We do this using a simplified version of \emph{alien calculus} in the theory of resurgence \cite{ecalle,MR3526111} centered around \emph{the ring of factorially divergent power series} \cite{Bgenfun}. Afterwards, we show that the obtained results are consistent with a transseries ansatz, a standard technique to study resurgence behavior \cite{costinDMJ}. Moreover, we will show that the transseries approach uncovers new behavior if the conditions for the ring of factorially divergent power series are not fulfilled.

\subsection{The ring of factorially divergent power series}
\label{ring}

We can use the resurgence-inspired combinatorial framework from \cite{Bgenfun} to get information on the asymptotic behavior of the coefficients $g_n$ defined in \eqref{eq:gn}. However, the results of \cite{Bgenfun} are not effective at proving specific asymptotic behaviors of solutions to differential equations. Additional assumptions have to be put in, as explained below. 

We briefly describe this framework from \cite{Bgenfun}.
Recall the asymptotic expansion notation that was introduced in Example~\ref{eg de}.
For given constants  $\alpha \in \R_{> 0}$, $\beta \in \R$,
a power series $f(x) = \sum_{n \geq 0} f_n x^n \in \R[[x]]$ is $(\alpha,\beta)$-\emph{factorially divergent} if there is a sequence $c_0^f,c_1^f,\ldots \in \R$ such that
\begin{align} \label{eq:asy} f_n &\,\sim\, \sum_{k\geq 0} \Gamma^{\alpha}_\beta(n-k) \,c_k^f \quad \text{ as } \quad n \rightarrow \infty\,, \end{align}
where $\Gamma^\alpha_\beta(z)= \alpha^{z+\beta} \Gamma(z+\beta)$. Such power series are closed under linear operations, multiplication, division, composition, and differentiation \cite{Bgenfun}. Note that power series whose coefficients $f_n$ grow less than $\Gamma^{\alpha}_\beta(n - R)$ for all $R \geq 0$ are always $(\alpha,\beta)$-factorially divergent. In this case, we can use the trivial sequence $c_k^f = 0$ for all $k \geq 0$. We define $\R^\alpha_\beta[[x]] \subset \R[[x]]$ to be the space of all $(\alpha,\beta)$-factorially divergent power series.
Moreover, we can define a linear map, $\A^\alpha_\beta: \R^\alpha_\beta[[x]] \rightarrow \R[[x]]$, that maps a power series $f \in \R^\alpha_\beta[[x]]$ to the power series $(\A^\alpha_\beta f)(x) =\sum_{k \geq 0} c_k^f x^k$, which encodes the coefficients in the asymptotic expansion \eqref{eq:asy} of $f_n$. This linear map is a derivation that fulfills product and chain rules (see loc.\ cit.\ for details).

Essential for us are the following two properties:
\begin{itemize}
\item[a] (\cite[Proposition 22]{Bgenfun})
If $f,g \in \R^\alpha_\beta[[x]]$ and $h(x) = f(x) g(x)$, then $h \in \R^\alpha_\beta[[x]]$ and 
$$(\A^\alpha_\beta h)(x) = (\A^\alpha_\beta f)(x) g(x) + f(x) (\A^\alpha_\beta g)(x).$$
\item[b]
(\cite[Proposition 15 and 38]{Bgenfun})
If $f \in \R^\alpha_\beta[[x]]$, then $\frac{\partial f}{\partial x} \in \frac{1}{x^2} \R^\alpha_\beta[[x]]$ and 
$$\left(\A^\alpha_\beta \frac{\partial f}{\partial x} \right)(x) = \left( \frac{\alpha^{-1}}{x^2} - \frac{\beta}{x} + \frac{\partial}{\partial x} \right) (\A^\alpha_\beta f)(x).$$
\end{itemize}
From the first property, it also follows that $\R^\alpha_\beta[[x]]$ is a linear subspace of $\R[[x]]$ that is closed under multiplication with elements of $\R[x]$.

Empirically, we can observe that power series solutions of the differential equation system in Theorem~\ref{thm comb de} are often elements of $\R^\alpha_\beta[[x]]$ for some pair $\alpha,\beta$ \cite{BD,BDM,BB}. Unfortunately, \cite[Proposition~38]{Bgenfun} does not guarantee that a specific differential equation system has a solution in $\R^\alpha_\beta[[x]]$ (see also the discussion around \cite[Proposition 55]{Bgenfun}), but only provides statements on this solution under the assumption that it is an element of $\R^\alpha_\beta[[x]]$ for some $\alpha,\beta$. Only for the Yukawa Example~\ref{eg yukawa} is it relatively straightforward to show rigorously that the power series solution of the differential equation is an element of $\R^\alpha_\beta[[x]]$ for specific values of $\alpha, \beta$ (see \cite[Section 7.1]{Bgenfun}). 

Within the context of Theorem~\ref{thm comb de}, we can abbreviate $\gamma_j(x) = x + \sum_{k=1}^j L_k(x)$ to get the differential equation system
\begin{align} \label{ode} \gamma_j(x) \,-\, \gamma_{j-1}(x) \,= \,\gamma(x) \left( 2 x \frac{\partial}{\partial x} \,-\,1 \right) \gamma_j(x) \text{ for } 1 \leq j \leq m\,, \end{align}
where we agree that $\gamma_0(x) = 0$ and we have by definition $\gamma_m(x) = \gamma(x)$.
This system has a \emph{unique} solution in $\R[[x]]$. This solution starts with $\gamma_j(x) = x + \sum_{k=1}^j a_k x^2 + \ldots$
Under the assumption  that $L_k \in \R^\alpha_\beta[[x]]$ or equivalently $\gamma_j \in \R^\alpha_\beta[[x]]$ for some $\alpha,\beta$, we may apply the $\A^\alpha_\beta$-operator to both sides of \eqref{ode}.
We get, after applying the product rule for the derivative $\A^\alpha_\beta$ (a) and the compatibility rule for $\A^\alpha_\beta$ and the ordinary derivative (b), the \emph{linear} differential equation system for the power series $(\A^\alpha_\beta \gamma_1)(x), \ldots, (\A^\alpha_\beta \gamma_m)(x)$,
\begin{gather} \begin{gathered} \label{Aode} (\A^\alpha_\beta \gamma_j)(x) \,-\, (\A^\alpha_\beta \gamma_{j-1})(x) \,= \\
a_j (\A^\alpha_\beta \gamma_m) (x) \left( 2 x \frac{\partial}{\partial x} \,-\,1 \right) \gamma_j(x) \\\,+\, a_j \gamma_m(x) \left( 2 x \frac{\partial}{\partial x} \,-\,1 \,+\, 2 \frac{\alpha^{-1}}{x} \,-\, 2 \beta \right) (\A^\alpha_\beta \gamma_j)(x) \\
\text{ for } 1 \leq j \leq m\,. \end{gathered} \end{gather}
The solution to this system encodes the asymptotic expansions of the power series $\gamma_j$ of the form in eq.~\eqref{eq:asy}.
A simple argument involving the structure of the asymptotic expansion~\eqref{eq:asy}
shows that if and only if $f \in \R^\alpha_\beta[[x]]$ and $\A^\alpha_\beta f \in x^k \R[[x]]$, then $f \in \R^\alpha_{\beta-k}[[x]]$
and $x^k(\A^\alpha_{\beta-k} f)(x) = (\A^\alpha_{\beta} f)(x)$.
\cite[Proposition 14]{Bgenfun}.
So, assuming that $\gamma_j \in \R^\alpha_\beta[[x]]$ for some $\alpha,\beta$ and that the differential system \eqref{ode} is fulfilled 
implies that the linear system \eqref{Aode} holds with  
$(\A^\alpha_\beta \gamma_j)(x) = c_j + d_j x + \bigO(x^2)$ and some specific values for $\alpha,\beta,c_j$ and $d_j$.

Expanding the system~\eqref{Aode} up to order $x$ gives the following constraints on these values:
\begin{align*} 0 &\,=\, c_j - c_{j-1} - 2 a_j \alpha^{-1} c_j \\
0 &\,=\, d_j - d_{j-1} - 2 a_j \alpha^{-1} d_j - a_j \left( c_m - (2 \beta+1) c_j + 2 \alpha^{-1} c_j \sum_{k=1}^m a_k\right) \text{ for } 1 \leq j \leq m\,. \end{align*}
The set of equations in the first line encodes a linear equation system that only has a nontrivial solution for $c_j$ if $\alpha = 2 a_k$ for some $1 \leq k \leq m$. 
From now on for the rest of this section, we require that all $a_j$ be distinct and non-vanishing. Hence, for any $1\leq q \leq m$, we have a 1-parameter family of solutions of the first set of equations,
$\alpha = 2 a_q$, $c_k =0$ for $k < q$, 
$c_k = S_q \prod_{j=q+1}^k\frac{\alpha}{\alpha-2 a_j}$ for $q < k \leq m$, and $c_q = S_q$ an arbitrary value different from $0$.

The second set is an inhomogeneous linear equation system in $d_1, \ldots, d_m$. 
In conjunction with the first set of equations, we may solve directly for $d_1,\ldots,d_{q-1}$,
\begin{align*}      d_k = S_q \left( \prod_{j=q+1}^m\frac{\alpha}{\alpha-2 a_j} \right) \sum_{j=1}^k a_j \prod_{i=j}^{k} \frac{\alpha}{\alpha-2a_i} \text{ for } 1 \leq k \leq q-1\,. \end{align*}
Note that the second system in $d_1,\ldots,d_m$ does not have full rank if we fix $\alpha = 2a_q$.
For the system to have a solution, 
the following equation needs to be fulfilled,
\begin{align} \label{beta_eq} 0 &= - d_{q-1} - a_q \left( c_m - (2 \beta+1) c_q + 2 \alpha^{-1} c_q \sum_{k=1}^m a_k\right). \end{align}

We summarize this discussion of the asymptotic growth rate of the coefficients of the tree-tubing Dyson--Schwinger equations in 
Theorem~\ref{thm comb de} in the following theorem.
\begin{thm}
\label{thm fixed alpha beta}
If all the $a_1,\ldots,a_m$ are different and non-vanishing and the unique set of solution power series $\gamma_1,\ldots,\gamma_m \in \R[[x]]$ of the equation system in eq.~\eqref{ode} are elements of $\R^\alpha_\beta[[x]]$ for some $\alpha,\beta \in \R$, then there is an integer $q$, $1 \leq q \leq m$ such that $(\alpha,\beta) = (\alpha_q,\beta_q)$, where 
\begin{align} \label{eq:alphas} \alpha_q &\,=\, 2 a_q \quad\text{ and }\\
\label{eq:betas} \beta_q &\,=\, \frac{1}{2 a_q}\sum_{\substack{i=1\\ i\neq q}}^m a_i \,+\, \frac12 \prod_{\substack{i=1\\ i\neq q}}^{m}\frac{a_q}{a_q-a_i}\,. \end{align}
Here, the empty sum is $0$, and the empty product is $1$, as usual.
\end{thm}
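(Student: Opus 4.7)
The proof proposal is essentially to follow the discussion already developed just before the theorem statement and push it to completion. The plan is to start from the assumption that $\gamma_1,\ldots,\gamma_m\in\R^\alpha_\beta[[x]]$, apply the operator $\A^\alpha_\beta$ to both sides of the system \eqref{ode}, and use the product rule (a) together with the differentiation rule (b) to convert it into the inhomogeneous linear system \eqref{Aode}. The nontrivial solutions of this linear system carry all the asymptotic information, so identifying admissible values of $\alpha,\beta$ reduces to an algebraic consistency analysis on the first few Taylor coefficients of the $(\A^\alpha_\beta\gamma_j)(x)$.

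Next I would write $(\A^\alpha_\beta\gamma_j)(x) = c_j + d_j\,x + \bigO(x^2)$ and plug this expansion into \eqref{Aode}, using that $\gamma_j(x) = x + \bigl(\sum_{k=1}^j a_k\bigr)x^2 + \bigO(x^3)$ and $\gamma(x) = x+\bigO(x^2)$. Collecting coefficients of $x^0$ yields the homogeneous linear recursion $c_j - c_{j-1} = 2 a_j\alpha^{-1}c_j$, with $c_0=0$. This cascaded system is triangular: for a nonzero solution to exist, we need the factor $(1 - 2a_j\alpha^{-1})$ to vanish for at least one index; writing that index as $q$ forces $\alpha = 2a_q$, which is exactly \eqref{eq:alphas}. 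Since the $a_i$ are distinct and nonzero, the entries $c_k$ for $k<q$ are forced to be $0$, the value $c_q = S_q$ is free, and for $k>q$ one gets $c_k = S_q\prod_{j=q+1}^k \alpha/(\alpha-2a_j)$, exactly the parametrization reported in the discussion.

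With $\alpha$ fixed, I would then examine the coefficient of $x^1$ in \eqref{Aode}, yielding the second, inhomogeneous linear system for the $d_j$. For $1\le k\le q-1$ the coefficient $(1-2a_k\alpha^{-1})$ is nonzero, so the recursion propagates $d_0=0$ downwards and determines the $d_k$ uniquely, giving
\[
 d_k \,=\, S_q\!\left(\prod_{j=q+1}^{m}\frac{\alpha}{\alpha-2a_j}\right)\sum_{j=1}^{k}a_j\prod_{i=j}^{k}\frac{\alpha}{\alpha-2a_i}\qquad\text{for } 1\le k\le q-1.
\]
At index $j=q$, however, the coefficient $(1-2a_q\alpha^{-1})$ vanishes, so the equation ceases to determine $d_q$ and instead imposes the compatibility condition \eqref{beta_eq}. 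Solving that relation for $\beta$, substituting the closed forms just obtained for $d_{q-1}$ and for $c_q,c_m$, and simplifying the telescoping product yields the formula \eqref{eq:betas} for $\beta_q$.

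The main obstacle I anticipate is the bookkeeping in the last step: showing that the sum $d_{q-1}/(a_q S_q)$ together with the $c_m/c_q$ and $\sum_k a_k$ contributions collapses to the clean expression $\tfrac12\prod_{i\neq q}a_q/(a_q-a_i)+\tfrac{1}{2a_q}\sum_{i\neq q}a_i-\tfrac12$. This is a purely algebraic identity about partial fractions of the rational function $\prod_{i\neq q}\alpha/(\alpha-2a_i)$ evaluated at $\alpha=2a_q$, and can be verified by rewriting $d_{q-1}$ as a telescoping partial-fraction sum or, equivalently, by checking that the residue of $z\mapsto \prod_{i=1}^{m}z/(z-2a_i)$ at $z=2a_q$ accounts for exactly the terms on the right-hand side. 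Once this identity is in hand, the theorem follows, with the distinctness hypothesis used both to guarantee that each $\alpha=2a_q$ is a simple root and to ensure that the partial-fraction denominators $a_q-a_i$ do not vanish, so that the formulas \eqref{eq:alphas}--\eqref{eq:betas} are well-defined.
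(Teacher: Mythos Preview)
Your proposal follows the paper's argument essentially verbatim up to the compatibility condition \eqref{beta_eq}: the application of $\A^\alpha_\beta$, the two linear systems for the $c_j$ and $d_j$, the determination of $\alpha=2a_q$, and the closed forms for $c_k$ and $d_k$ are identical to what the paper records in the discussion preceding the theorem.

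The only divergence is in the final algebraic simplification. The paper proves the identity
\[
\sum_{j=1}^{q} a_j \prod_{i=1}^{j-1}\frac{a_q-a_i}{a_q}=a_q
\]
by expanding the products, splitting the sum into a part $A$ (terms with $j<q$) and a part $B$ (the $j=q$ term), and exhibiting a sign-reversing cancellation between the monomials of $A$ and those of $B$ other than $a_q$. Your telescoping suggestion is a genuine and slightly cleaner alternative: setting $P_j=\prod_{i=j}^{q-1}\tfrac{a_q}{a_q-a_i}$ one has $a_jP_j=a_q(P_j-P_{j+1})$, so $\sum_{j=1}^{q-1}a_jP_j=a_q(P_1-1)$, and adding the $j=q$ term $a_q$ gives $a_qP_1$, which is exactly what is needed. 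This avoids the term-by-term bookkeeping of the paper. Your residue formulation, on the other hand, is not clearly stated (you evaluate at a point rather than extract a residue in a variable that is still free), and would need to be made precise before it counts as a proof; the telescoping route is the one that actually delivers.
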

\begin{proof}
Substitution of the values for $d_{r-1}$ and $c_m$ 
into Eq.~\eqref{beta_eq}  results in 
\begin{align*} \beta &= \frac{1}{2 a_q} \left( -a_q + \sum_{j=1}^q a_j \prod_{\substack{i=j\\ i\neq q}}^{m} \frac{a_q}{a_q-a_i} + \sum_{i=1}^m a_i \right). \end{align*}
It remains to be proven that 
\[
     \sum_{j=1}^q a_j \left( \prod_{\substack{i=j\\ i\neq q}}^{m}\frac{a_q}{a_q-a_i} \right) = a_q \left( \prod_{\substack{i=1\\ i\neq q}}^{m}\frac{a_q}{a_q-a_i} \right),
\]
which is equivalent to 
    \[
    \sum_{j=1}^q a_j \prod_{i=1}^{j-1}\frac{a_q-a_i}{a_q} = a_q.
    \]
We prove this equality by breaking the sum into two pieces
    \[
    \sum_{j=1}^q a_j \prod_{i=1}^{j-1}\frac{a_q-a_i}{a_q} = \underbrace{\sum_{j=1}^{q-1} a_j \prod_{i=1}^{j-1}\left(1 - \frac{a_i}{a_q}\right)}_A + \underbrace{a_q \prod_{i=1}^{q-1}\left(1 - \frac{a_i}{a_q}\right)}_B.
    \]
    Every term in $A$ has the form $(-1)^ta_j/a_q^t$ times a product of $t$ $a_i$s with $i<j$ for some $t\geq 0$. Here, $t$ is the number of factors of the product from which the $a_i/a_q$ was taken to obtain this term.  Furthermore, every such term appears exactly once in $A$.  For $B$, the term where $1$ is taken from each factor when expanding the product is $a_q$.  For every other term in $B$, the explicit $a_q$ can be canceled, so every such term has the form $(-1)^{t+1}/a_q^t$ times a product of $t+1$ $a_i$s for some $t\geq 0$, where here $t+1$ is the number of times the $a_i/a_q$ was taken when expanding out the product, and again every such term appears exactly once. Organizing these terms by the largest index $j$ appearing in the product of $a_i$s we see these terms are exactly the terms obtained from $A$ but with opposite signs.  Therefore the terms of $A$ cancel with all terms of $B$ except for the term $a_q$, giving the desired result.
\end{proof}

\begin{cor}
\label{cor:asymp}
If all the $a_1,\ldots,a_m$ are different and non-vanishing and 
the $L_k$ from Theorem~\ref{thm comb de} are elements of $\R^\alpha_\beta[[x]]$ for some $\alpha,\beta\in \R$,
then the coefficients of 
$$\gamma(x)= \sum_{n \geq 1} g_n x^n = x + \sum_{j=1}^m L_j(x)$$
 have an asymptotic expansion that starts as
\begin{align*} g_n \,\sim\, S_q\, \alpha_q^{n+\beta_q}\, \Gamma(n+\beta_q)\left( 1+ \frac{c_1}{n} + \frac{c_2}{n^2} + \ldots\right) \text{ as } n \rightarrow \infty, \end{align*}
where $q$ is some integer fulfilling $1\leq q \leq m$, and 
$c_1,c_2,\ldots$ are explicitly computable constants,
while $S_q$ is some number in $\R$. 
\end{cor}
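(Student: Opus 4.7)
The plan is to read this corollary as a direct translation of Theorem~\ref{thm fixed alpha beta} into the conventional form of an asymptotic expansion, combined with the definition of $\R^\alpha_\beta[[x]]$ and a routine rewriting of ratios of $\Gamma$-functions.

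First I would observe that the hypothesis on the $L_k$ transfers to the partial sums. Since $\R^\alpha_\beta[[x]]$ is a linear subspace of $\R[[x]]$ closed under multiplication by polynomials (as noted just after property (b)), and polynomials lie trivially in $\R^\alpha_\beta[[x]]$ with vanishing asymptotic data, the combinations $\gamma_j = x + \sum_{k=1}^{j} L_k$ all lie in $\R^\alpha_\beta[[x]]$; in particular so does $\gamma = \gamma_m$. Theorem~\ref{thm fixed alpha beta} then forces $(\alpha,\beta) = (\alpha_q,\beta_q)$ for some integer $q$ with $1 \le q \le m$, with $\alpha_q$ and $\beta_q$ given by \eqref{eq:alphas}--\eqref{eq:betas}.

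Next I would unpack the definition of $\R^{\alpha_q}_{\beta_q}[[x]]$ from \eqref{eq:asy}: there is a sequence $c_0^g, c_1^g, \ldots \in \R$ such that
\[
g_n \,\sim\, \sum_{k \ge 0} \Gamma^{\alpha_q}_{\beta_q}(n-k)\, c_k^g \,=\, \alpha_q^{n+\beta_q}\sum_{k \ge 0} \alpha_q^{-k}\,\Gamma(n-k+\beta_q)\, c_k^g \quad \text{as } n\to\infty.
\]
I would then factor out $\Gamma(n+\beta_q)$ using the identity $\Gamma(n-k+\beta_q) = \Gamma(n+\beta_q)\big/\prod_{j=1}^{k}(n+\beta_q-j)$ and expand each reciprocal Pochhammer-type product as a convergent power series in $1/n$ with leading term $n^{-k}$. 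Collecting the resulting double series by total power of $1/n$ produces a prefactor of the form $1 + c_1/n + c_2/n^2 + \ldots$, where the $c_i$ are explicit polynomial expressions in $\beta_q$, $\alpha_q^{-1}$, and the ratios $c_k^g/c_0^g$; the latter are determined by iteratively extracting further coefficients from the linear system \eqref{Aode}. Setting $S_q = c_0^g$ yields the stated form.

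The main point to emphasize, rather than a technical obstacle, is that this argument does not pin down which $q$ arises nor the value of $S_q$. The linear analysis behind Theorem~\ref{thm fixed alpha beta} only constrains the pair $(\alpha,\beta)$ and, inductively, the ratios among the leading coefficients; the overall Stokes-type constant $S_q = c_0^g$ depends on the full nonlinear initial value problem satisfied by $\gamma$ and could in principle vanish. In that degenerate case the asymptotic statement remains formally correct but is vacuous at leading order, and the true growth of $g_n$ would be governed by a different admissible pair $(\alpha_{q'}, \beta_{q'})$. Once Theorem~\ref{thm fixed alpha beta} is in hand the proof of the corollary itself is essentially bookkeeping.
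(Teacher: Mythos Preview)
Your proposal is correct and follows exactly the route the paper takes: the paper's proof is the single sentence ``Follows from Theorem~\ref{thm fixed alpha beta} and the defining Eq.~\eqref{eq:asy} of factorially divergent power series,'' and you have simply unpacked that sentence in detail. Your closing remarks about the indeterminacy of $q$ and the possible vanishing of $S_q$ also match the paper's discussion immediately following the corollary.
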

\begin{proof}
The statement follows from Theorem~\ref{thm fixed alpha beta} and the defining Eq.~\eqref{eq:asy} of factorially divergent power series.
\end{proof}

Unfortunately, Theorem~\ref{thm fixed alpha beta} must require that the power series solution to the differential equation system \eqref{ode} is an element of $\R^\alpha_\beta[[x]]$ for some pair $\alpha,\beta$.
This requirement is fulfilled in the case of Example~\ref{eg yukawa}.
It would be desirable to have an effective condition for this requirement to hold.

Moreover, it is very reasonable to conjecture that the asymptotic behavior of the coefficients $g_n$ in
the statement above is completely described by the case $q=m$, where $\alpha=2a_m$, which is the 
maximal value that $\alpha$ can attain as the coefficients $a_i$ are required to be ordered. However, it does not necessarily follow from the argument above 
that the prefactor $S$, \emph{the Stokes constant}, is non-zero in this particular case.

An interesting exception is the $m=1$ case, where the prefactor $S$ is known explicitly \cite[\S 7.1]{Bgenfun}.

\subsection{Transseries approach}
\label{sec trans}

The asymptotic analysis in the previous section used the recursive structure of the system of equations in Theorem~\ref{thm comb de}. In this section, we confirm these results using a different method, based instead on the single nonlinear ODE \eqref{eq:ode} in Corollary~\ref{cor:nonlinear}. The solution $\gamma(x)$ in \eqref{eq:gn} is not the most general solution to \eqref{eq:ode} because the solution should depend on $m$ parameters to accommodate the integration constants. Given that the expansion coefficients $g_n$ in \eqref{eq:gn} are often factorially divergent, we expect the general solution to \eqref{eq:ode} to be a {\it transseries} by standard resurgence arguments. (See~\cite{costinDMJ} for precise conditions when the transseries methodology can be applied to systems of nonlinear differential equations.) This transseries is  obtained by extending the formal power series expansion $\gamma$, which we, from now on, denote as $\gamma_{\rm formal}$, in \eqref{eq:gn} to include ``non-perturbative'' terms, which are ``beyond all orders'' in the small $x$ expansion \cite{costinDMJ}. A simple way to generate these terms is to substitute the ansatz
\[
    \gamma(x)=\gamma_{\rm formal}(x)+\sigma\, \gamma_{\rm trans}(x)
\]
into \eqref{eq:ode} and linearize in the parameter $\sigma$, meaning that we expand in $\sigma$ and drop all nonlinear terms. This yields a homogeneous {\it linear} ODE, of order $m$, for $\gamma_{\rm trans}(x)$, in which the original formal series $\gamma_{\rm formal}(x)$ appears in the coefficients. Motivated by the general procedure in \cite{costinDMJ} we assume that the non-perturbative contributions are of the form %
\begin{eqnarray*} \gamma_{\rm trans}(x)= \frac{1}{x^\beta} \, \exp\left[-\frac{1}{\alpha\, x}\right]\left(1+\bigO(x)\right) \text{ as } x\to 0^+. \end{eqnarray*}
Here, $\bigO(x)$ stands for some power series in $x\,\Q[[x]]$.
Substitution of this ansatz into the linearized equation for $\gamma_{\rm trans}(x)$ determines $m$ possible values for the pair of parameters $(\alpha, \beta)$. For the ansatz to yield a consistent solution of \eqref{eq:ode}, $\alpha$ has to be a root of a specific polynomial. The nonlinear ODE \eqref{eq:ode} has a special factorized form, and this has the consequence that this polynomial factorizes into $m$ linear factors. We find that the possible $\alpha$ parameters are in the set $\{\alpha_1,\ldots,\alpha_m\}$ with $\alpha_q$ as defined in Theorem~\ref{thm fixed alpha beta}.

For each such value of $\alpha$ the next term in the small $x$ expansion of the linearized equation for $\gamma_{\rm trans}(x)$ determines the parameter $\beta$ to be %
equal to the corresponding value $\beta_q$ in Theorem~\ref{thm fixed alpha beta}.
We can only solve the associated equation for $\beta$ if 
all the $a_j$ are distinct. That means if we assume that $a_1<a_2< \dots <a_m$, then there are $m$ possible forms of the non-perturbative contributions given by %
\begin{eqnarray} \label{eq:trans} \gamma_{\rm trans}^{[q]}(x) = \frac{1}{x^{\beta_q}}\, \exp\left[-\frac{1}{\alpha_q\, x}\right] \left(1+\bigO(x)\right) \text{ for } q\in \{1, \dots, m\}\,, \end{eqnarray}
where the pair of parameters $(\alpha_q, \beta_q)$ are given in \eqref{eq:alphas} and \eqref{eq:betas}.
Thus, we may see the full formal solution space to \eqref{eq:ode} as given by the transseries
$$
    \gamma(x)=\gamma_{\rm formal}(x)+\sum_{q=1}^m \sigma_q\, \gamma_{\rm trans}^{[q]}(x) + \bigO(\sigma^2)\,,
$$
where higher-order contributions in $\sigma_1,\ldots,\sigma_m$ can be computed mechanically
(see, e.g., \cite{BD,BDM,BB}).

\begin{remark}
    The correspondence between the parameters $(\alpha_p, \beta_p)$ in \eqref{eq:trans} and in Theorem~\ref{thm fixed alpha beta} is an explicit example of resurgence. We find a quantitative relationship between the $n\to \infty$ asymptotics of the coefficients $g_n$ of the formal solution $\gamma_{\rm formal}(x)$ and the non-perturbative small $x$ corrections $\gamma_{\rm trans}(x)$ (see \cite{costinDMJ} for a detailed treatment of this correspondence). 

In the dictionary that translates between the transseries and the asymptotic information,
the integration constants $\sigma_1,\ldots,\sigma_m$ correspond to the Stokes constants of the asymptotic expansions.
\end{remark}

\begin{example}[First order] 
For $m=1$ the formal solution begins as follows:
\begin{eqnarray*} \gamma_{\rm formal}(x)= x +a_1 x^2 +4a_1^2 x^3+27 a_1^3 x^4+\dots \end{eqnarray*}
The associated non-perturbative contribution is of the form
\begin{eqnarray*} \gamma_{\rm trans}(x) = x^{-\frac{1}{2}}\, \exp\left[-\frac{1}{2\,a_1\, x}\right] \left(1+\bigO(x)\right) \,. \end{eqnarray*}
\end{example}

\begin{example}[Second order with different values of $a_1$ and $a_2$]
\label{sec:degeneracy}
For $m=2$, with $a_1<a_2$, the formal solution begins as follows:
\begin{eqnarray*} \gamma_{\rm formal}(x)&=& x +(a_1+a_2) x^2 +(4(a_1^2+a_2^2)+5 a_1 a_2)) x^3 \\
 &&+(27 (a_1^3+a_2^3)+40(a_1^2 a_2+a_1 a_2^2)) x^4+\dots \end{eqnarray*}
The two independent non-perturbative contributions are of the form
\begin{eqnarray*} \gamma_{\rm trans}^{[1]}(x) &=& x^{-\frac{a_2}{2a_1}-\frac{a_1}{2(a_1-a_2)}}\, \exp\left[-\frac{1}{2\,a_1\, x}\right]\left(1+\bigO(x)\right) \\
 \gamma_{\rm trans}^{[2]}(x) &=& x^{-\frac{a_1}{2a_2}-\frac{a_2}{2(a_2-a_1)}}\, \exp\left[-\frac{1}{2\,a_2\, x}\right]\left(1+\bigO(x)\right) \,. \end{eqnarray*}
With $\{a_1, a_2\}=\{1,2\}$, as in \cite{BB}, the formal perturbative solution is $\gamma_{\rm formal}(x)= x+3x^2 +30x^3+483 x^4+ \dots$, and the non-perturbative contributions are of the form
\begin{eqnarray*} \gamma_{\rm trans}^{[1]}(x) = x^{-\frac{1}{2}}\, \exp\left[-\frac{1}{2\, x}\right] \left(1+\bigO(x)\right) \,, \quad \gamma_{\rm trans}^{[2]}(x) =x^{-\frac{5}{4}}\, \exp\left[-\frac{1}{4\, x}\right]\left(1+\bigO(x)\right)\,. \end{eqnarray*}
\end{example}
\begin{example}[Second order with $a_1=a_2$]
If $a_1=a_2$, then the situation is different.
As the numbers $a_i$ are not strictly ordered anymore, we cannot apply Theorem~\ref{thm fixed alpha beta}.
However, we can still go forward with a modified transseries ansatz.
We fix $a_1=a_2=1$ and find that the formal power series solution of \eqref{eq:ode} starts with
\begin{eqnarray*} \gamma_{\rm formal}(x)= x+2x^2+13x^3+134x^4+ 1811x^5+29568x^6 +\dots \end{eqnarray*}
Experimentally, we find that the procedure described at the beginning of Section~\ref{sec trans}
still works if we modify our transseries ansatz to have the following shape:\footnote{The precise shape of this ansatz can be deduced from a Frobenius analysis of the second-order linear homogeneous ODE satisfied by $\gamma_{\rm trans}$.}
\begin{align*} \gamma(x) &= \gamma_{\rm formal}(x) + \sigma_1 \gamma_{\rm trans}^{[1]}(x) + \sigma_2 \gamma_{\rm trans}^{[2]}(x) + \bigO(\sigma^2)\,, \intertext{where} \gamma_{\rm trans}^{[1]}(x)& = \frac{1}{x}\, e^{-\frac{1}{2x}+\frac{1}{\sqrt{x}}}(1+\bigO(\sqrt{x})) \,, \quad \gamma_{\rm trans}^{[2]}(x) =\frac{1}{x}\, e^{-\frac{1}{2x}-\frac{1}{\sqrt{x}}}(1+\bigO(\sqrt{x})). \end{align*}
With this ansatz, we may proceed as before.
Evidently, the exponential terms have a completely different form when $a_1\neq a_2$ as in the previous example.
Here, we see the appearance of two different non-perturbative scales, $\exp\left[-{1}/{\sqrt{x}}\,\right]$ and 
$\exp\left[-{1}/{2x}\right]$, where the first of the scales has two different phases $\exp\left[-{1}/{\sqrt{x}}\,\right]$ and $\exp\left[{1}/{\sqrt{x}}\,\right]$. 
However, we have little to no information on the asymptotic behavior of the coefficients $g_n$ in this degenerate case. 
The sequence $g_n$ is not $(\alpha,\beta)$-factorially divergent in the sense of the definitions in Section~\ref{ring}.
The available tools from \cite{Bgenfun} and \cite{costinDMJ} do not seem applicable here.  A promising tool to study this asymptotic behavior is \'Ecalle acceleration \cite{ecalle,costin-ecalleacc}, which allows the asymptotic treatment of transseries with different scales.
\end{example}

We do not know if such a degenerate case, where the values of two $a_i$ coincide and $F(\rho)$ has a pole of order two, is 
allowed by the various constraints that quantum field theory imposes on $F(\rho)$.
However, we can provide an interpretation of this degeneracy and the different 
asymptotics it produces on the combinatorial side:

\begin{remark}[Combinatorial heuristics for the degenerate $a_1=a_2$ case]
Experimentally, working with some thousands of terms
generated with the \texttt{gamma.py} program (see Remark~\ref{remark gamma program}), it appears that the large-order asymptotics of the perturbative expansion coefficients changes abruptly when $a_1=a_2$; another reflection of the fundamental difference between the behavior of the solution when $a_1=a_2$ vs $a_1 \neq a_2$. 

We can give a combinatorial heuristic for why the $a_1=a_2$ case ought to have different growth behavior. However, note that this could only be part of a rigorous argument in the special case of the leaf tubings, see Subsection~\ref{sec other intermediate approx}.  The heuristic is as follows: any difference in the behavior depending on properties of the values of the $a_i$ should be seen combinatorially in the compatible labelings. The effect of the compatible labeling constraint is most pronounced when many tubes share a root since that is when the admissibility constraint on words comes into play most strongly.  Consequently, for a heuristic as to why $a_1=a_2$ should behave differently from $a_1\neq a_2$ we can look to why the number of admissible words behaves differently when $a_1=a_2$ compared to $a_1\neq a_2$.  For the leaf tubings, all tubes of size $>1$ share a root, and so counting the number of compatible labelings of a leaf tubing of a tree on $n+1$ vertices is exactly counting the number of admissible words of length $n$.  For other tubings, we do not have an exact count for the number of compatible labelings, but we can still use the number of admissible words as a useful heuristic for what kinds of differences we might see.

With this in mind, let us enumerate admissible words of length $n$ with $m=2$ alphabets.  This is amenable to direct enumeration: simply sum over the possible lengths of the word on alphabet $\Omega_1$ to obtain $\sum_{k=0}^{n}a_1^k a_2^{n-k}$.  Now consider the asymptotic behavior of this expression for large $n$. In the case $a_1=a_2 =a$,
\[
\sum_{k=0}^{n}a_1^k a_2^{n-k} = \sum_{k=0}^{n}a^{n} = (n+1)a^{n}, 
\]
while in the case $a_1<a_2$, 
\[
\sum_{k=0}^{n}a_1^{k}a_2^{n-k}  = a_2^{n}\sum_{k=0}^{n}\left(\frac{a_1}{a_2}\right)^{n-k} = a_2^{n}\left(\frac{1-\left(\frac{a_1}{a_2}\right)^{n+1}}{1-\frac{a_1}{a_2}}\right).
\]
So, while both have exponential growth with base $a_2$, when $a_1=a_2$, there is a subexponential factor of $n$, which does not appear in the $a_1\neq a_2$ case.
\end{remark}

\begin{example}[Third-order case] For $m=3$, with $a_1<a_2<a_3$, the formal solution begins as follows:
\begin{eqnarray} \gamma_{\rm formal}(x)&=& x +(a_1+a_2+a_3) x^2 +(4(a_1^2+a_2^2+a_3^2)+5 (a_1 a_2+a_1 a_3+a_2 a_3)) x^3 \nonumber\\
 &&\hskip -2.5cm +(27(a_1^3+a_2^3+a_3^3)+40(a_1^2 a_2+a_1^2 a_3+a_2^2 a_1+a_2^2 a_3 +a_3^2 a_1+a_3^2 a_2)+54 a_1 a_2 a_3) x^4 +\dots \nonumber \end{eqnarray}
The three independent  non-perturbative contributions are of the shape
\begin{align*} \gamma_{\rm trans}^{[1]}(x) &= x^{-\frac{a_2+a_3}{2a_1}-\frac{a_1^2}{2(a_1-a_2)(a_1-a_3)}}\, \exp\left[-\frac{1}{2\,a_1\, x}\right] \left(1+\bigO(x)\right) \\
\gamma_{\rm trans}^{[2]}(x) &= x^{-\frac{a_1+a_3}{2a_2}-\frac{a_2^2}{2(a_2-a_1)(a_2-a_3)}}\, \exp\left[-\frac{1}{2\,a_2\, x}\right] \left(1+\bigO(x)\right) \\
 \gamma_{\rm trans}^{[3]}(x) &= x^{-\frac{a_1+a_2}{2a_3}-\frac{a_3^2}{2(a_3-a_1)(a_3-a_2)}}\, \exp\left[-\frac{1}{2\,a_3\, x}\right] \left(1+\bigO(x)\right)\,. \end{align*}
As for the $m=2$ case, we see that when any of the $a_j$ parameters are equal, the transseries structure changes abruptly, as some of the exponents above will become ill-defined. %

For the $\phi^3$ example, we have $\{a_1, a_2, a_3\}=\{2, 3, 6\}$, and the formal solution is: $\gamma_{\rm formal}(x)= x+11x^2 +376x^3+20241 x^4+ \dots$, 
\cite[\href{https://oeis.org/A051862}{A051862}]{oeis}. The 3 independent non-perturbative contributions are of the shape \cite{BDM}
\begin{align*} \gamma_{\rm trans}^{[1]}(x) &= x^{-\frac{11}{4}} \exp\left[-\frac{1}{4\, x}\right] \left(1+\bigO(x)\right) \\
\gamma_{\rm trans}^{[2]}(x) &= x^{+\frac{1}{6}} \exp\left[-\frac{1}{6\, x}\right] \left(1+\bigO(x)\right) \\
 \gamma_{\rm trans}^{[3]}(x) &= x^{-\frac{23}{12}} \exp\left[-\frac{1}{12\, x}\right] \left(1+\bigO(x)\right). \end{align*}
\end{example}
An interesting feature of this $\phi^3$ model is the appearance of logarithmic terms \cite{BDM,BB} at higher orders in the transseries for the anomalous dimension $\gamma(x)$, due to the resonance effect: $\{\frac{1}{a_1}, \frac{1}{a_2}, \frac{1}{a_3}\}=\{\frac{1}{2}, \frac{1}{3}, \frac{1}{6}\}=\{3, 2, 1\}\times \frac{1}{6}$. We leave to future work the interesting combinatorial question of investigating the general conditions for such logarithmic terms to be generated.


\begin{thebibliography}{10}

\bibitem{PHB}
Paul-Hermann Balduf.
\newblock Seminar at the Max Planck Institute for Mathematics, Bonn, during the
  workshop ``Combinatorics, Resurgence and Algebraic Geometry in Quantum Field
  Theory'', August 2024.

\bibitem{Bdsems}
Paul-Hermann Balduf.
\newblock Dyson–{S}chwinger equations in minimal subtraction.
\newblock {\em Ann. Inst. Henri Poincaré Comb. Phys. Interact.}, 12(1):1--50,
  2025.

\bibitem{BCEFNOHY}
Paul-Hermann Balduf, Amelia Cantwell, Kurusch Ebrahimi-Fard, Lukas Nabergall,
  Nicholas Olson-Harris, and Karen Yeats.
\newblock {Tubings, chord diagrams, and Dyson{\textendash}Schwinger equations}.
\newblock {\em J. Lond. Math. Soc.}, 110(5):e70006, 2024.

\bibitem{Befficient}
Marc~P. Bellon.
\newblock An efficient method for the solution of {S}chwinger–{D}yson
  equations for propagators.
\newblock {\em Lett. Math. Phys.}, 94:77--86, 2010.

\bibitem{bergeron_varieties_1992}
Fran\c{c}ois Bergeron, Philippe Flajolet, and Bruno Salvy.
\newblock Varieties of increasing trees.
\newblock In {\em {{CAAP}} '92}, pages 24--48. Springer, 1992.

\bibitem{Bgenfun}
Michael Borinsky.
\newblock Generating asymptotics for factorially divergent sequences.
\newblock {\em Electron. J. Combin.}, 25(4):Paper No. 4.1, 32, 2018.

\bibitem{BB}
Michael Borinsky and David Broadhurst.
\newblock Resonant resurgent asymptotics from quantum field theory.
\newblock {\em Nucl. Phys. B}, 981:115861, 2022.

\bibitem{BD}
Michael Borinsky and Gerald~V. Dunne.
\newblock Non-perturbative completion of {H}opf-algebraic {D}yson-{S}chwinger
  equations.
\newblock {\em Nucl. Phys. B 957}, 957:115096, 2020.

\bibitem{BDM}
Michael Borinsky, Gerald~V. Dunne, and Max Meynig.
\newblock Semiclassical trans-series from the perturbative {H}opf-algebraic
  {D}yson-{S}chwinger equations: $\phi^3$ {QFT} in 6 dimensions.
\newblock {\em SIGMA}, 17:087, 2021.

\bibitem{bk30}
David Broadhurst and Dirk Kreimer.
\newblock Combinatoric explosion of renormalization tamed by {H}opf algebra:
  30-loop {P}ade-{B}orel resummation.
\newblock {\em Phys. Lett. B}, 475:63--70, 2000.

\bibitem{bkerfc}
David Broadhurst and Dirk Kreimer.
\newblock Exact solutions of {D}yson-{S}chwinger equations for iterated
  one-loop integrals and propagator-coupling duality.
\newblock {\em Nucl. Phys. B}, 600:403--422, 2001.

\bibitem{costinDMJ}
Ovidiu Costin.
\newblock On {B}orel summation and {S}tokes phenomena for rank-1 nonlinear
  systems of ordinary differential equations.
\newblock {\em Duke Math. J.}, 93:289--344, 1998.

\bibitem{costin-ecalleacc}
Ovidiu Costin and Saleh Tanveer.
\newblock {Nonlinear evolution PDEs: existence and uniqueness of solutions,
  asymptotic and Borel summability properties}.
\newblock {\em Ann. Inst. Henri Poincare (C) Anal. Non Lin\'eaire},
  24:795--823, 2007.

\bibitem{DFY}
William~T. Dugan, Loïc Foissy, and Karen Yeats.
\newblock {Sequences of trees and higher-order renormalization group
  equations}.
\newblock {\em J. Math. Phys.}, 64(7):072302, 07 2023.

\bibitem{Dyson:1952tj}
Freeman Dyson.
\newblock {Divergence of perturbation theory in quantum electrodynamics}.
\newblock {\em Phys. Rev.}, 85:631--632, 1952.

\bibitem{ecalle}
Jean {\'E}calle.
\newblock {Les Fonctions Resurgentes, I, II, III}.
\newblock {\em Publications Mathematiques d’Orsay}, 1981.

\bibitem{FS}
Philippe Flajolet and Robert Sedgewick.
\newblock {\em Analytic combinatorics}.
\newblock Cambridge University Press, 2009.

\bibitem{MR1979786}
Martin Klazar.
\newblock Non-{$P$}-recursiveness of numbers of matchings or linear chord
  diagrams with many crossings.
\newblock {\em Adv. in Appl. Math.}, 30(1-2):126--136, 2003.

\bibitem{MR3526111}
Claude Mitschi and David Sauzin.
\newblock {\em Divergent series, summability and resurgence {I}}.
\newblock Springer, 2016.

\bibitem{NWchord}
Albert Nijenhuis and Herbert~S. Wilf.
\newblock The enumeration of connected graphs and linked diagrams.
\newblock {\em J. Combin. Theory A}, 27:356--359, 1979.

\bibitem{oeis}
{OEIS Foundation Inc.}
\newblock The {O}n-{L}ine {E}ncyclopedia of {I}nteger {S}equences.
\newblock Published electronically at \url{http://oeis.org}.

\bibitem{OHthesis}
Nicholas Olson-Harris.
\newblock {\em Some Applications of Combinatorial Hopf Algebras to
  Integro-Differential Equations and Symmetric Function Identities}.
\newblock PhD thesis, University of Waterloo, 2024.

\bibitem{Ymem}
Karen Yeats.
\newblock {\em Rearranging {D}yson-{S}chwinger equations}, volume 211 of {\em
  Memoirs of the American Mathematical Society}.
\newblock AMS, 2011.

\end{thebibliography}
\end{document}